\numberwithin{equation}{section}
\theoremstyle{plain}
\newtheorem{Th}{Theorem}[section]
\newtheorem{Lemma}[Th]{Lemma}
\newtheorem{Prop}[Th]{Proposition}
 \theoremstyle{definition}
\newtheorem{Def}[Th]{Definition}
\newtheorem{?}[Th]{Problem}
\newtheorem{Remark}[Th]{Remark}
\newcommand{\BP}{\mathbb P}
\newcommand{\BE}{\mathbb E}
\begin{document}
\allowdisplaybreaks
\title{Family-wise Error Rate Control with E-values \thanks{We thank Emmanuel Cand\`{e}s, David Siegmund, Yo Joong Choe, Rianne de Heide, Lasse Fischer, Nick Koning, Ruodu Wang, Jelle Goeman, and members of the e-Readers group for helpful comments. L.L. is grateful for the support of National Science Foundation grant DMS-2338464. W.H. is supported by an Achievement Rewards for College Scientists Fellowship.}}
\author{Will Hartog \thanks{Department of Statistics, Stanford University. Email: whartog@stanford.edu } \and Lihua Lei \thanks{Graduate School of Business and Department of Statistics, Stanford University. Email: lihualei@stanford.edu}}

 \maketitle

\begin{abstract} 

The closure principle is a standard tool for achieving strong family-wise error rate (FWER) control in multiple testing problems. We develop an e-value-based closed testing framework that inherits nice properties of e-values, which are common in settings of sequential hypothesis testing or universal inference for irregular parametric models. We prove that e-value-based closed testing strongly controls the post-hoc FWER in the static setting, and has stronger anytime-valid and always-valid FWER-controlling properties in the sequential setting. Furthermore, we extend the celebrated graphical approach for FWER control \citep{bretz2009graphical}, using the weighted average of e-values for the local test, a strictly more powerful approach than weighted Bonferroni local tests with inverse e-values as p-values. In general, the computational cost for closed testing can be exponential in the number of hypotheses. Although the computational shortcuts for the p-value-based graphical approach are not applicable, we develop an efficient polynomial-time algorithm using dynamic programming for e-value-based graphical approaches with any directed acyclic graph, and tailored algorithms for the e-Holm procedure previously studied by \cite{vovk2021values} and the e-Fallback procedure. 
\end{abstract}

\maketitle

\section{Introduction and Preliminaries}

\subsection{Family-wise error rate}

In analyses that test multiple hypotheses, it is necessary to consider the problem of multiple testing, and control for some notion of false rejection. One such approach is to control the Family-wise Error Rate (FWER), the probability of any false rejection, at a given level $\alpha\in(0,1)$. In particular, given null hypotheses $H_1,...,H_n$, with $\mathcal{H}_0$ denoting the subset of true nulls, an algorithm that produces the (random) rejection set $\mathcal{R}$ based on test statistics controls FWER at level $\alpha$ if 
\[\mathbb{P}(\mathcal{R}\cap \mathcal{H}_0\neq \emptyset)\le \alpha.\]
In the literature, this property is often referred to as strong FWER control, to distinguish from the weak form, which applies only when all null hypotheses are true. Since we will not address weak FWER control, except briefly in Section \ref{sec:post-hoc} to clarify existing work, we will adopt the simpler terminology throughout.

Controlling the FWER is crucial to maintain the reliability of statistical results, especially in  medicine, economics, and experimentation in online platforms. In medicine, clinical trials often compare the efficacy of several dosages or involve multiple endpoints, and FWER control ensures that no false positives undermine patient safety or lead to the adoption of ineffective therapies \citep{pocock1987analysis, bretz2009graphical, vickerstaff2019methods}. In economics, researchers often test multiple hypotheses defined by multiple  interventions, multiple demographic subgroups, and multiple policy outcomes at the same time; controlling FWER helps avoid erroneous conclusions that could misguide policy decisions \citep{romano2005stepwise, list2019multiple, viviano2024model}. Similarly, in online platforms, A/B testing is used to compare multiple product versions or features under different metrics. Controlling FWER minimizes the chance of falsely identifying an intervention as superior, avoiding shipping ineffective features \citep{johari2022always}. Recently, FWER control has been applied to areas beyond traditional multiple testing problems, such as distribution-free risk control for black-box machine learning algorithms \citep{angelopoulos2021learn}. 

\subsection{Closed testing}

Given null hypotheses $H_1,...,H_n$, a local test for index subset $I\subset[n]$, where $[n]$ is shorthand notation for the set $\{1, \ldots, n\}$, tests the hypothesis $H_I=\cap_{i\in I}H_i$ that all the nulls $H_i$ in $I$ are true. By convention, we call $H_1, \ldots, H_n$ elementary hypotheses.

Given a family of valid local tests $\{\phi_I\}_{I\subset[n]}$, meaning $\BP_{H_I}(\phi_I=1)\leq\alpha$ for all $I\subset[n]$, the closure principle (Theorem 1 of \cite{marcus1976closed}) for $\{\phi_I\}_{I\subset[n]}$ rejects $H_i$, the elementary hypothesis, if $\phi_I=1$ for all $I\ni i$. Any admissible procedure which controls FWER is a closed test \citep{sonnemann1988Vollständigkeitssätze}. The closure principle has also been extended to controlling more general error measures such as the false discovery proportion (FDP) \citep{goeman2011multiple, goeman2021only,fischer2024online} and false discovery rate (FDR) \citep{xu2026bringing}. 

In the p-value literature, the most common local test is weighted Bonferroni, which given p-values $p_1,...,p_n$, and a set of valid weights for any $I$, i.e. $\{w_i(I)\}$ with 
\begin{equation}\label{eq:weight_condition}
\sum_{i\in I}w_i(I)\leq1,
\end{equation}
rejects $H_I$, or equivalently computes $\phi_I=1$ iff $\min_{i\in I}\frac{p_i}{w_i(I)}\leq\alpha$, which is valid as 
\[\BP_{H_I}(\phi_I=1)=\BP_{H_I}(\cup_{i\in I}\{p_i\leq\alpha w_i(I)\})\leq\sum_{i\in I}\BP_{H_I}(p_i\leq\alpha w_i(I))\leq\sum_{i\in I}\alpha w_i(I)\leq\alpha,\]
by a union bound.

The challenge becomes simplifying the computation from the $2^n-1$ local tests to an efficient number. The literature includes many such procedures including Holm's procedure \citep{holm1979simple}, which uses the local test of unweighted Bonferroni with $w_i(I) = 1/|I|$.

A larger class of local tests, to which Holm's procedure belongs, is that of the graphical approach, introduced by \cite{bretz2009graphical}. It is a generic framework that unifies most widely-used FWER controlling procedures and allows the researcher to encode logical relationships among hypotheses. We will revisit the local tests in detail in Section \ref{sec:dag}. Holm's procedure is an example of the graphical approach, with the complete graph and equal weights. The Fallback procedure, introduced by \cite{wiens2005}, is an example of the graphical approach with a chain graph. In general, any graph can be used for the purposes of the graphical approach, where its closure can be efficiently computed, with a simple greedy algorithm for determining the rejection set based on the p-values.

\subsection{E-values}\label{sec:eval_intro}
E-values are a recent alternative to p-values as a tool for hypothesis testing and quantifying evidence against the null \citep{wasserman2020universal,grunwald2024safe,shafer2021testing, waudby2024estimating, wang2022false}. For a null hypothesis $H_0$, an e-value for $H_0$ is a realization of an e-variable $e$ which has the property that $\BE_{H_0}[e]\leq1$. By Markov's inequality, the test $\phi=1\{e\geq1/\alpha\}$ is valid at level $\alpha$, since by Markov's inequality $\BP(e\geq1/\alpha)\leq\alpha$.

E-values arise naturally in sequential settings, especially in always-valid inference, where the practitioner would like the flexibility of having valid inference across time. Inference is always-valid if one can choose the time with the best power and the method is still valid, whereas inference is anytime-valid if one can choose any stopping time, and the method is still valid. We discuss these concepts in the context of the family-wise error rate further in Section \ref{sec:always-valid}. This framework is very relevant for the setting of online data collection in the tech sector, including the setting discussed by \cite{johari2022always}. In these settings, the e-value is often defined through an e-process $(e_{t})_{t\ge 1}$, which is a stochastic process which at any filtration-adapted stopping time $\tau$ has the property that $e_\tau$ is an e-value, i.e. $\mathbb{E}[e_\tau]\leq1$ \citep{ramdas2020admissible}; see Section \ref{sec:sims} for examples. An equivalent definition \citep{ramdas2023gametheoretic} is that the process is bounded by a non-negative martingale with marginal expectation $1$, known as a test martingale \citep{shafer2011test}.

E-values also prove to be useful in universal inference for irregular parametric models \citep{wasserman2020universal, tse2022note, spector2023discussion, park2023robust}. In general, e-values offer more robustness to get valid inference in the presence of various concerns about data-dependent experimental decisions, such as a post-hoc choice of $\alpha$ \citep{grunwald2024beyond, hemerik2024choosing, koning2024posthocalphahypothesistesting}. E-values are also admissible for online closed testing where hypotheses arrive sequentially \citep{fischer2024online}; however this differs from our setting with a fixed set of hypotheses.

\subsection{Our contribution: e-value-based closed testing}\label{sec:algorithmic_contribution}

Our contribution is threefold. First, we formalize the general e-closed testing framework when the local test for hypothesis $H_I$ is given by an e-value $e_I$ (Section \ref{sec:e-value-closed-testing-intro}) and develop a class of weighted e-Bonferroni local tests (Section \ref{sec:weighted-e-bonf}). 

Previous work studies a special case of unweighted e-Bonferroni which yields the e-Holm procedure that we will discuss at length \citep{vovk2021values, vovk2023confidence, vovk2024true}, and an online analogue of e-closed testing \citep{fischer2024online}. Later work generalizes the closure principle to false discovery rate control \citep{xu2026bringing}. 

Second, we establish for generic e-closed testing post-hoc strong FWER guarantees in static and sequential settings (Section \ref{sec:post-hoc}), which does not hold for p-closed testing, and the always-valid strong FWER guarantee (Section \ref{sec:always-valid}). Our result generalizes the post-hoc validity \citep{grunwald2024beyond, koning2024posthocalphahypothesistesting}, anytime-validity \citep{koning2025sequentializing}, and always-validity \citep{howard2021time,
johari2022always,turner2023exact} for single hypothesis testing, and develops a class of procedures to control the post-hoc FWER introduced by \cite{koning2024posthocalphahypothesistesting}. With weighted e-Bonferroni local tests and independent elementary pseudo e-values defined in Section \ref{sec:pareto}, such as the maximum of e-processes, we show that the closure approximately controls FWER when the target level is small. Third, we develop a generic dynamic programming-based algorithm for e-DAG procedures (Section \ref{sec:dag}) and tailored algorithms for special graphs (Section \ref{sec:efall}). We also develop a rejection rule for e-Holm (Section \ref{sec:e-holm}), which is faster than the e-Holm adjusted e-values algorithm developed by \cite{vovk2021values} and \cite{vovk2023confidence}.

Specifically, we first study closed testing procedures where each local hypothesis $H_I$ is associated with an e-value $e_I$. We prove that generic e-value-based closed tests control multiple stricter forms of Type-I error than the FWER. In particular, they achieve the post-hoc FWER control introduced by \cite{koning2024posthocalphahypothesistesting}, which allows  researchers to choose the target FWER level after observing the data. By contrast, we show that p-value-based closed testing does not control FWER post-hoc in general. In sequential settings where local e-values $e_I$ are given by e-processes \citep[e.g.][]{ramdas2023gametheoretic}, the resulting closed test 
guarantees the post-hoc anytime-valid strong FWER control: with probability $1-\alpha$, no null hypothesis would be rejected if the test is applied at a stopping time. In addition, it achieves always-valid strong FWER control: with probability $1-\alpha$, no null hypothesis would ever be rejected if the test is applied at every time point. 

Next, following the p-value-based closed testing literature, we investigate weighted e-Bonferroni local tests where the local e-value $e_I$ is a weighted average of e-values corresponding to elementary hypotheses. We show that the weighted e-Bonferroni tests are provably more powerful than the weighted p-Bonferroni tests with the same weights and $1/e$ p-values, the only admissible derived p-values without distributional assumptions on the e-values \citep{wang2024admissiblewaymergingevalues}. As a result, the closure of weighted e-Bonferroni tests is guaranteed to reject at least as many hypotheses as its p-value-based counterpart with $1/e$ p-values. We also examine the sequential settings where each elementary hypothesis $H_i$ is associated with an e-process $e_{it}$. In these settings, Ville's inequality \citep{ville1939etude} for nonnegative martingales $e_t$ says that
\begin{align}\label{eq:villes}
\mathbb{P}\left(\sup_{t\geq0}e_t\geq\frac{1}{\alpha}\right)\leq\alpha,
\end{align}
which yields a more powerful p-value, known as the always-valid p-value, as the inverse of $\max_{t} e_{it}$, the global maximum of the e-process \citep{johari2022always}. While $\tilde{e}_i = \max_{t}e_{it}$ is not a valid e-value, we show that, if the e-processes associated with elementary hypotheses are independent, the e-value-based closed test applied to pseudo-e-values $\tilde{e}_i$ approximately controls FWER when the target level is small and dominates the corresponding p-value-based closed test applied to always-valid p-values $1/\tilde{e}_i$. In fact, $\tilde e_i=1/p_i$ for any independent p-values $p_i$ have this asymptotic approximate validity.

Lastly, we develop efficient algorithms for e-graphical approaches, defined as the closure of weighted e-Bonferroni tests with the same weights used by p-graphical approaches and formally introduced in Section \ref{sec:weighted-e-bonf}.
The p-graphical approaches can be computed in polynomial time thanks to the consonance property of the weighted p-Bonferroni tests under the carefully constructed weights  \citep{gabriel1969simultaneous}. The consonance property yields a sequential rejection algorithm with $O(n^2)$ computational complexity. However, our e-value-based local tests are not consonant in general, as we discuss later in Section \ref{sec:dag}. While the standard sequential rejection algorithm does not work, we develop efficient polynomial time algorithms for e-graphical approaches with arbitrary directed acyclic graphs (DAG) using dynamic programming (Section \ref{sec:dag}). The computational complexity is $O(n|\mathcal{E}|)$ in general, where $|\mathcal{E}|$ is the number of edges, but usually smaller. We also improve the algorithm for two special cases: e-Holm (Section \ref{sec:e-holm}) and e-Fallback (Section \ref{sec:efall}). The e-Holm procedure was proposed in the prior work of \cite{vovk2021values, vovk2023confidence, vovk2024true}, though we derive a different representation based on the effective cutoff, computed in $O(n)$ time, which enables a direct comparison between e-Holm and p-Holm. Moreover, the previous work does not discuss other graphical approaches which pose substantially greater algorithmic challenges than e-Holm.

These contributions are significant because the e-closed testing approach is more powerful when e-values are the inferential object of interest, for example in settings where post-hoc validity or always-validity are necessary, or in sequential settings where the valid p-values are obtained through e-processes. Furthermore, when the p-values for elementary hypotheses are independent, we prove that the closure of weighted e-Bonferroni applied to pseudo 1/p e-values approximately controls FWER when $\alpha$ is small (e.g., $0.05$). This is provably more powerful than the closure of weighted p-Bonferroni with the same weights. On the other hand, when stronger FWER guarantees are not needed and powerful dependent p-values can be constructed, such as in the setting with fixed sample sizes and parametric hypotheses, best available e-values could be much less powerful than best p-values. As a result, the gains from the better aggregation for e-closed testing generally does not outweigh the power loss from individual tests.

\section{Theory of closed testing with e-values}\label{sec:e-value-closed-testing}

\subsection{E-value-based closed testing}\label{sec:e-value-closed-testing-intro}

In general, we define an e-value-based closed testing procedure as a collection of e-values $\{e_I\}_{I\subset[n]}$ where each $e_I$ is a valid e-value for the intersection hypothesis $H_I=\cap_{i\in I}H_i$. The procedure rejects  $H_i$ iff $e_I\ge 1/\alpha$ for all supersets $I$ containing $i$. In this section, we make no structural assumption on the e-values. Later in Section \ref{sec:weighted-e-bonf}, we will consider a special case called weighted e-Bonferroni as the intersection e-value $e_I$. 

For p-value closed testing, it is convenient to report an adjusted p-value $p_i^*$ for each hypothesis $H_i$ such that $H_i$ is rejected by the closed test iff $p_i^* \le \alpha$. Analogously, we define the adjusted e-value 
\begin{equation}\label{eq:adjusted_evalue}
e_i^* \triangleq \min_{I\ni i}e_I.
\end{equation}
Clearly, $H_i$ is rejected by the e-value-based closed test iff $e_i^*\ge 1/\alpha$.

These definitions yield an immediate yet crucial result that lays the foundation of all error-controlling properties of e-value-based closed tests.
\begin{Lemma}\label{lem:key}
Let $e_i^*$ be adjusted e-values defined in \eqref{eq:adjusted_evalue}. For $\mathcal{H}_0=\{i:H_i\text{ is true}\}$ the set of true nulls,
\[\max_{i\in \mathcal{H}_0}e_i^*\le e_{\mathcal{H}_0}.\]
\end{Lemma}

Since $e_{\mathcal{H}_0}$ is a valid e-value, we can easily prove the FWER controlling property. 

\begin{Prop} \label{prop:standard_fwer}
The e-value-based closed test controls the strong FWER: 

\begin{align*}
    \mathrm{FWER} = \mathbb{P}\left(\max_{i\in \mathcal{H}_0}e_i^*\geq\frac{1}{\alpha} \right)\leq \mathbb{P}\left(e_{\mathcal{H}_0}\geq\frac{1}{\alpha} \right)\leq \alpha.
\end{align*}
\end{Prop}

Other notions of closed testing with e-values include \cite{fischer2024online}, which introduces closure and proves the necessity of e-values for admissibility in the online setting where hypotheses arrive sequentially for simultaneous FDP control, and \cite{xu2026bringing}, which extends closed testing to FDR control. \cite{xu2026bringing} show that their FDR-controlling procedure, which likewise relies on an e-value $e_I$ for each subset, can decide post-hoc to control FWER through Theorem 12, but it is tied to the simultaneous FDR control and therefore less powerful than our strictly FWER-focused context.

In the following two subsections, we leverage Lemma \ref{lem:key} to prove stronger error-controlling properties of e-value-based closed testing. 

\subsection{Post-hoc FWER control}\label{sec:post-hoc}

\cite{grunwald2024beyond} formalizes the notion of roving $\alpha$s, or post-hoc choice of $\alpha$ at which to control the error rate. They prove that if using an e-value $e$ to test a single hypothesis, one can interpret the data at any (potentially data-dependent) level since for any $\hat\alpha$ which is a function of the data, \begin{align*}
\BE\left[\frac{\mathbf{1}\{e\geq\frac{1}{\hat\alpha}\}}{\hat\alpha}\right]\leq \BE[e] \le 1,
\end{align*}
\cite{koning2024posthocalphahypothesistesting} generalizes the post-hoc type-I error control to multiple testing. In particular, the Appendix C.2 of  \cite{koning2024posthocalphahypothesistesting} introduces 
the notion of post-hoc \emph{weak} FWER control in terms of controlling the probability of a false rejection when all hypotheses of interest are nulls, i.e., $\mathcal{H}_0 = \{1, \ldots, n\}$.

We prove that the post-hoc \emph{strong} FWER control can be achieved by e-value-based closed testing.

\begin{Th}\label{th:post-hoc-fwer} For closed testing adjusted e-values $\{e_i^*\}_{i=1}^n$ and set of null hypotheses $\mathcal{H}_0$, for any data-dependent function of the data $\hat\alpha$,
    \begin{align*}
    &\BE\left[\frac{\mathbf{1}\{\max_{i\in\mathcal{H}_0} e_i^*\geq\frac{1}{\hat\alpha}\}}{\hat\alpha}\right]\leq1 
\end{align*}
\end{Th}

In general, a p-value defined by $\mathbb{P}[p\leq\alpha]\leq\alpha$ is not post-hoc valid, so we would expect a p-value-based closed test to not control post-hoc FWER. As a concrete example, consider a collection of independent p-values with $n_0$ uniform null p-values $p_i:i\in\mathcal{H}_0$. Assume that for the non-null p-values there is some $\epsilon>0$ for which $\mathbb{P}[p>\epsilon]>0$. Then the p-Holm procedure rejects $H_i$ iff its adjusted p-value $p_i^* \triangleq \max_{I\ni i}\min_{j\in I}|I|p_j$ is below $\alpha$. If we choose $\hat{\alpha} = \min_{i}p_i^*$, the minimal level at which at least one hypothesis is rejected, and define the minimum p-values in the null and non-null cases respectively as $p_{(1)}^0,p_{(1)}^1$, a sufficient condition for making a false rejection is that $p_{(1)}^0<p_{(1)}^1$, which is when the smallest p-value is a true null. Under this event, the smallest adjusted p-value is $np_{(1)}^0$. Then we can bound
\begin{align*}
\mathbb{E}\left[\frac{\mathbf{1}\{\min_{i\in\mathcal{H}_0}p_i^*\leq\hat{\alpha}\}}{\hat{\alpha}}\right]& \geq\mathbb{E}\left[\frac{\mathbf{1}\{p_{(1)}^0<p_{(1)}^1\}}{\min_{i\in\mathcal{H}_0}p_i^*}\right]\geq\mathbb{E}\left[\frac{\mathbf{1}\{p_{(1)}^0<\epsilon,p_{(1)}^1>\epsilon\}}{\min_{i\in\mathcal{H}_0}p_i^*}\right]\\
&\geq\mathbb{E}\left[\frac{\mathbf{1}\{p_{(1)}^0<\epsilon,p_{(1)}^1>\epsilon\}}{np_{(1)}^0}\right]\\&=\mathbb{P}[p_{(1)}^1>\epsilon]\cdot\int_0^\epsilon\frac{n_0(1-u)^{n_0}}{nu}du=\infty,
\end{align*}
where the first inequality is from $\min_{i\in \mathcal{H}_0}p_i^*\geq\hat\alpha$, and the third inequality is from $p_{(1)}^0\geq\min_{i\in\mathcal{H}_0}p_{(i)}$ and therefore $np_{(1)}^0\geq\min_{i\in\mathcal{H}_0}p_i^*$.

\subsection{Stronger FWER control in sequential settings}\label{sec:always-valid}

In this section, we consider e-value-based closed testing in the sequential setting. Let $\mathcal{F}_t$ denote the $\sigma$-field generated by data up to time $t$ collected for all hypotheses. For each intersection hypothesis $H_I$, we define an e-process $e_{It}$ with respect to this filtration $\mathcal{F}_t$. An e-process is defined as a sequential process $\{e_t\}_{t\geq0}$ such that at any filtration adapted stopping time $\tau$ its expectation is bounded by 1 under the null: $\mathbb{E}[e_\tau]\leq1$. A common e-process for sequential independent data is the sequential probability ratio test (e.g. \cite{johari2022always}), which we use in our simulations in Section \ref{sec:sims}. \cite{wang2025anytime} refer to e-processes of this kind as global e-processes, in contrast to local e-processes that are valid with respect to their own filtrations. Then we can also define the adjusted e-processes $e^*_{it}$ as the adjusted e-values at time $t$, i.e. $e_{it}^*=\min_{I\ni i}e_{It}$. By Lemma \ref{lem:key}, we have that for each $t$, \begin{align}\label{eq:max-nulls}
    \max_{i\in\mathcal{H}_0}e^*_{it}\leq e_{\mathcal{H}_0,t}.
\end{align} 
Since $e_{\mathcal{H}_0,t}$ is  an e-process, the optional stopping theorem implies that $e_{\mathcal{H}_0, \tau}$ is a valid e-value for any $\mathcal{F}_t$-stopping time $\tau$. Using the same argument as in Theorem \ref{th:post-hoc-fwer}, we prove the post-hoc-anytime-validity. 
\begin{Th}\label{th:post-hoc-anytime-valid}
For any $\mathcal{F}_t$-stopping time $\tau$ and $\mathcal{F}_\tau$ measurable level $\hat{\alpha}$, 
    \begin{align*} &\BE\left[\frac{\mathbf{1}\{\max_{i\in\mathcal{H}_0} e_{i\tau}^*\geq\frac{1}{\hat\alpha}\}}{\hat\alpha}\right]\leq1 
\end{align*}
\end{Th}

While anytime-validity enables more flexible timing for decisions, it forbids making decisions based on retrospective inspection. By Ville's inequality \eqref{eq:villes}, we can prove the always-validity of e-value-based closed testing with a fixed target level. 

\begin{Th}\label{th:always-valid}
    For any fixed level $\alpha \in (0, 1)$, 
    \begin{align*}  \BP\left(\max_{i\in\mathcal{H}_0}\sup_te_{it}^*\geq\frac{1}{\alpha}\right)\leq\alpha
    \end{align*}
\end{Th}

As a result of Theorem \ref{th:always-valid}, the strong FWER of any null adjusted e-value \textit{ever} being rejected is controlled by $\alpha$. This means that we can make rejections based on the supremum over $t$ over the adjusted e-values, even if the supremum was in the past. This property is stronger than anytime-validity; see \cite{ramdas2020admissible} for a detailed comparison between two notions of sequential validity.

\section{Closed testing with weighted e-Bonferroni}\label{sec:weighted-closed-testing}

\subsection{Weighted e-Bonferroni local test}\label{sec:weighted-e-bonf}

In this section, we introduce one method of obtaining local test e-values using a weighted average. This framework allows for more refined robustness results for family-wise error control, discussed in Section \ref{sec:pareto}.

To obtain a valid local test with individual e-values $\{e_i\}_{i=1}^n$, one approach is to convert them into p-values using e-to-p calibrators \citep{vovk2021values}. Vovk and Wang (2021, Proposition 2.2) show that the unique  admissible converted p-value is the inverse e-value truncated at $1$, i.e., $\min\{1/e_i, 1\}$, which is further formalized in \cite{wang2024admissiblewaymergingevalues}. We will ignore the truncation without loss of generality, as none of the procedures discussed in this paper reject p-values greater than or equal to $1$. Then the weighted Bonferroni test with weights defined by \eqref{eq:weight_condition} rejects $H_I$ iff 
\begin{equation}\label{eq:inverse_e_bonferroni}
\min_{i\in I}\frac{1}{w_i(I) e_i}\le \alpha \Longleftrightarrow \max_{i\in I} w_i(I)e_i\ge \frac{1}{\alpha}.
\end{equation}

While \eqref{eq:inverse_e_bonferroni} is a valid test, it does not fully unleash the potential of e-values. An
appealing feature of e-values is that they combine easily; the weighted average of e-values is an e-value, and \cite{vovk2021values} show that the unweighted mean is optimal among e-merging functions. It is also possible to combine
independent e-values by multiplication, which \cite{fischer2024online} use with
closed testing, but we do not make any independence between e-values assumptions.

This suggests an alternative local test, which we call \emph{weighted e-Bonferroni} due to its analogy to weighted p-Bonferroni, that rejects $H_I$ iff

\begin{equation}\label{eq:e_bonferroni}
e_I\triangleq \sum_{i\in I}w_i(I) e_i \ge \frac{1}{\alpha}.
\end{equation}
The test is valid as $e_I$ is a valid e-value. 

When the weights $w_i(I)$ are chosen according to a p-graphical approach, we call the closed test an \emph{e-graphical approach}. The nomenclature extends to specific graphical approaches, such as e-Holm and e-Fallback.

Comparing \eqref{eq:inverse_e_bonferroni} with \eqref{eq:e_bonferroni}, it is evident that the weighted e-Bonferroni test is more powerful in the sense that it rejects $H_I$ whenever the weighted p-Bonferroni test \eqref{eq:inverse_e_bonferroni} with inverse e-values does. As a consequence, the e-value-based closed test is at least as powerful as its inverse-e p-value counterpart. This result provides motivation for using a closed testing procedure based on e-Bonferroni rather than p-Bonferroni when doing multiple testing with e-values. It does not say anything directly about a comparison with p-Bonferroni based on other p-values.

The weighted e-Bonferroni local test, with weights $w_i(I)$, leads to the following optimization to compute adjusted e-values: 
\begin{equation}
e_i^*=\min_{I\ni i}\left\{\sum_{i\in I}w_i(I)e_i\right\}.
\end{equation}
\cite{vovk2021values, vovk2023confidence} proposed an efficient algorithm for e-Holm where $w_i(I) = 1/|I|$ in our notation. We develop dynamic programming-based polynomial-time algorithms for more general e-graphical approaches in Sections \ref{sec:efall} and \ref{sec:dag}.

\subsection{Working with independent pseudo e-values}\label{sec:pareto}

The weighted e-Bonferroni local test provides a framework to prove a more detailed characterization of the family-wise error rate when using the running maxima to make rejections.

An e-process can be interpreted through its max $\tilde e_i=\max_{t\geq0}e_{it}$, as by Ville's inequality \eqref{eq:villes}, the probability under the null of ever exceeding $1/\alpha$ is bounded by $\alpha$: $\mathbb{P}(\tilde e_i\geq1/\alpha)\leq \alpha$. We call $\tilde e_i$ a pseudo e-value, as it is valid for testing $H_i$ at threshold $1/\alpha$ but it is not an e-value, as $\mathbb{E}[\tilde e_i]>1$ unless $e_{it}\leq1$ for all $t$ a.s..

Therefore, in the multiple testing context, we cannot plug these pseudo e-values into our e-closed testing framework and get FWER control for free. However, the result of Theorem \ref{th:always-valid} is unsatisfactory as the adjustment at time $t$ ignores previous peaks that the e-processes have achieved. We cannot prove that e-closed testing with pseudo e-values (equivalently running maxima) is valid at level $\alpha$, but we can prove its FWER is close to $\alpha$ when $\alpha$ is small and the pseudo e-values are independent.

We can bound the rejection probability through the following result, where we define $\{\tilde e_i^*\}_{i=1}^n$ to be the set of pseudo e-values adjusted by e-closed testing with weighted e-Bonferroni. 

\begin{Th}\label{th:fwer-pareto-bound}
    For any $K\ge 1$, define   \begin{align*}
B_K(\alpha):=\mathbb{P}\left(\sum_{i=1}^KY_i\geq\frac{K}{\alpha}\right),
    \end{align*}
    where $Y_1, \ldots, Y_K$ are i.i.d. with $\mathbb{P}(Y_i \ge y)=1/y, \,\, y\ge 1$.
    
    For a collection of independent e-processes $\{e_{it}\}_{t\geq0,i\in[n]}$, and corresponding pseudo e-values $\{\tilde e_i\}_{i\in[n]}$, e-closed testing with weighted average e-values applied to the pseudo e-values controls the FWER at level $B_{n_0}(\alpha)$ for $n_0=|\mathcal{H}_0|$ where $\mathcal{H}_0$ is the set of true nulls:
    \begin{align*}
\mathbb{P}\left(\max_{i\in\mathcal{H}_0}\tilde e_i^*\geq\frac{1}{\alpha}\right)\leq B_{n_0}(\alpha)=\alpha+\frac{n_0-1}{n_0}\alpha^2\log\frac{1}{\alpha}+O_{n_0}(\alpha^2),
    \end{align*}
    where $O_{n_0}(\alpha^2)$ denotes a term that is bounded by $C_{n_0}\alpha^2$ for some constant that only depends on $n_0$.
As a consequence, the e-closed testing applied to running maxima $\tilde e_{it}^* := \max_{s\le t}e_{it}$ achieves always-valid FWER control at level $B_{n_0}(\alpha)$.
\end{Th}

This result follows from a coupling of the pseudo e-values with independent $\mathrm{Pareto}(1)$ random variables, and an exact recursion on the integral in Lemma \ref{lem:pareto-bound}, and in general corresponds to an asymptotic interpretation of having FWER $\alpha+o(\alpha)$. The properties of e-value-based procedures with $\alpha \rightarrow 0$ have also been studied in other contexts \citep{koning2024continuous, koning2025sequentializing}. 

We plot $B_K(0.05)$ for a wide range of $K$ in Figure \ref{fig:B_k-plot} and observe $B_n(0.05)$ is very close to $0.05$. Since most FWER guarantees tend to be loose in practice, we suggest applying the e-closed testing on $\tilde e_i$ directly without making corrections on $\alpha$. Note that this version is guaranteed to be more powerful than the corresponding p-closed-test applied to inverse maxima of e-processes. If exact FWER control is required, one can adjust the level to $B_n^{-1}(0.05)$, which is plotted in Figure \ref{fig:B_k-plot}. Again, the adjusted level is close to $0.05$. We compare both the unadjusted and adjusted tests  with the p-closed-test counterpart empirically in Section \ref{sec:fixed-sample}.

\begin{figure}
    \centering
    \includegraphics[width=0.5\linewidth]{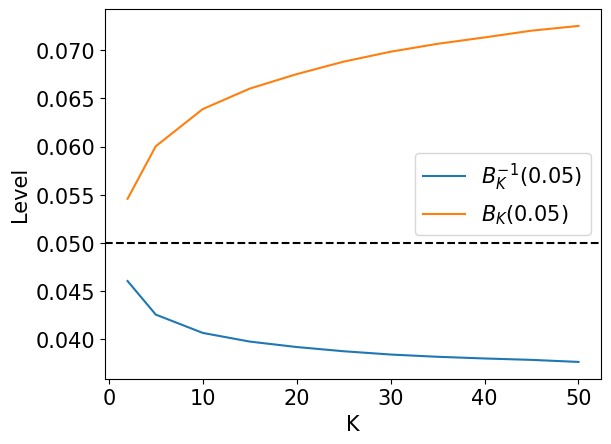}
    \caption{We show in orange the type-I error bound $B_K(\alpha)$ for $\alpha=0.05$ for varying sizes $K$, as well as in blue the value of $\alpha'=B_K^{-1}(\alpha)$ for which $B_K(\alpha')$ would be valid at level $\alpha$.}
    \label{fig:B_k-plot}
\end{figure}

\begin{Remark} Because the only requirement for these results was the Pareto dominance of $\tilde e_i$, it follows that any p-value $p_i$ can be transformed into a pseudo e-value $\tilde e_i=1/p_i$ and the approximate FWER control holds for the e-Holm procedure.
    
\end{Remark}

Extending post-hoc inference to be always-valid requires additional structure on the e-processes, as per \cite{tavyrikov2025carefree}. Following their work, we use calibrators to get post-hoc always-valid family-wise error rate control in Appendix \ref{sec:always-post-hoc}.

\section{E-Holm}\label{sec:e-holm}

\subsection{A simple expression of the e-Holm rejection rule}

In this section, we use the example of Holm's procedure to build a comparison between two strategies to control FWER given individual e-values $\{e_i\}_{i=1}^n$: a na\"{i}ve strategy based on calibrating the e-values into p-values, and the strategy based on e-closed testing with weighted average e-values introduced in Section \ref{sec:weighted-e-bonf}.

Holm's procedure is the closure of a set of unweighted p-Bonferroni local tests with weight $w_i(I) = 1/|I|$ for every subset $I\subset [n]$ and $i\in I$. Following the description in Section \ref{sec:weighted-e-bonf}, the e-Holm procedure closes the unweighted e-Bonferroni local tests: 
\[H_I \text{ rejected iff }\frac{1}{|I|}\sum_{i\in I}e_i\ge \frac{1}{\alpha}.\]

\begin{Remark}
By Lemma \ref{lem:key}, we only need $e_i$ to be compound e-values, first used in \cite{wang2022false} and coined by Definition 2.1 of \cite{ignatiadis2025asymptotic}, which are random variables whose expectations sum to at most $|I|$ for any subset $I$.
\end{Remark}

By definition of the closure principle, the e-Holm procedure rejects $H_i$ iff 

\begin{align}\label{eq:ebonf}
    \sum_{j\in I}e_j\geq\frac{|I|}{\alpha} \text{ for all } I\ni i.
\end{align}

Surprisingly, the rejection rule boils down to a simple thresholding rule with an easy-to-compute critical value, which can be computed in $O(n)$ time.

\begin{Th}\label{th:eholm_rejection} Let $J^*=\{j:e_j<1/\alpha\}$ be the set of insignificant e-values. Then the e-Holm procedure rejects $H_i$ iff 
\begin{equation}\label{eq:eholm_rejection}
e_{i} \ge \frac{1}{\alpha}+\sum_{j\in J^*}\left(\frac{1}{\alpha} - e_{j}\right).
\end{equation}
\end{Th}

Intuitively, when comparing a local e-value $e_I$ to the threshold $1/\alpha$, the worst case excludes all e-values with $e_j\geq1/\alpha$, as these will bring the average e-value closer to rejection. Then if the average e-value including $i$ and all insignificant e-values in $J^*$ is rejected, we know that all $I\ni i$ are rejected. The full proof of Theorem \ref{th:eholm_rejection} is in Appendix \ref{sec:proof-appendix}.

E-Holm was previously studied by \cite{vovk2021values}, who develop an $O(n\log n)$ adjusted e-values algorithm (2021, Algorithm 1) and prove its FWER control (2021, Theorem 5.1). Our $O(n)$ rejection set algorithm in Theorem \ref{th:eholm_rejection} is new. For completeness, we restate and explain the intuition behind the adjusted e-values algorithm in Appendix D, which is in the supplementary material.

Notably, the threshold rejection procedure \eqref{eq:eholm_rejection} can be computed without sorting the e-values, so Theorem \ref{th:eholm_rejection} does provide a runtime improvement over the adjusted e-values algorithm if the goal is to recover the rejection set for a fixed $\alpha$.

\subsection{Comparing e-Holm with p-Holm}\label{sec:e-holm-vs-p-holm} The p-Holm procedure with inverse-e p-values rejects the hypothesis $H_{(i)}$ corresponding to the $i$th largest e-value $e_{(i)}$ iff 
\[\frac{1}{e_{(j)}}\le \frac{\alpha}{n - j + 1}, \quad \text{for all }j\le i.\]
The discussion in Section \ref{sec:weighted-e-bonf} already implies that \eqref{eq:eholm_rejection} is less stringent. To bring more insight, we facilitate the comparison by considering an extreme case where the non-nulls are strong and $e_i > 1/\alpha$ almost surely (or with extremely high probability). In order for p-Holm to make any rejection, the largest e-value $e_{(1)}$ needs to exceed $n/\alpha$. By contrast, the threshold in e-Holm \eqref{eq:eholm_rejection} is bounded by $(|\mathcal{H}_0| + 1)/\alpha$ from above, which can be much smaller than $n/\alpha$. Therefore, e-Holm can adapt to the number of nulls while p-Holm cannot. This is analogous to adaptive false discovery rate control \citep{storey2004strong, benjamini2006adaptive}.

As a concrete example of this adaptivity, we consider a regime with very strong non-nulls in which, for increasing number of non-nulls, the power of e-Holm improves upon the power of p-Holm in a fixed sample size. The setting is with $n$ hypotheses where we observe $X_i$ and the null is $H_{0,i}:X_i\sim\mathrm{Unif}[0,1]$ and the point alternative is a mixture $H_{1,i}:X_i\sim\pi\mathrm{Unif}[0,1]+(1-\pi)\mathrm{Unif}[1-c,1]$, where $c\in(0,1)$ is some small constant. With probability $1-\pi$, under the alternative, the data is guaranteed to be close to 1.

In this case, the optimal Neyman-Pearson p-value relies on thresholding the likelihood ratio, which is 
\begin{align*}
    \mathrm{LR}(x)=\begin{cases}
        \pi+\frac{1-\pi}{c} & x \geq1-c \\ \pi & x < 1-c.
    \end{cases}
\end{align*}
The optimal p-value is non-unique because the likelihood ratio only takes two values. We show in Lemma \ref{lem:neyman-pearson} that $p(X_i) = 1-X_i$ is an optimal Neyman-Pearson p-value for $H_{0,i}$. 

On the other hand, we take as our e-value $e_i=\mathbf{1}\{X_i\geq1-c\}/c$, an all-or-nothing e-value in the vein of all-or-nothing e-values in \cite{shafer2011test,vovk2021values}. 
 
By Theorem \ref{th:eholm_rejection}, e-Holm makes a rejection iff 
\[\frac{1}{c}\ge \frac{n - M + 1}{\alpha}, \quad \text{where }M=\#\{i:X_i\geq1-c\}\Longleftrightarrow M\geq n+1-\alpha/c =: m_{c,\alpha},\]
in which case all hypotheses with $X_i\ge 1 - c$ are rejected. 

We present a simulation with $\alpha=0.05$, $\pi=0.01$, and $c=0.01$, with $n_1$ varying between $[5,10]$ out of $n=10$ total hypotheses. We show in Figure \ref{fig:unif-sim} that e-Holm becomes significantly more powerful than p-Holm for a large number of non-nulls, while maintaining FWER control. This confirms the heuristic discussion at the beginning of the section that e-Holm is more adaptive to the null proportion than p-Holm. 

\begin{figure}
    \centering
    {\includegraphics[width=0.48\textwidth]{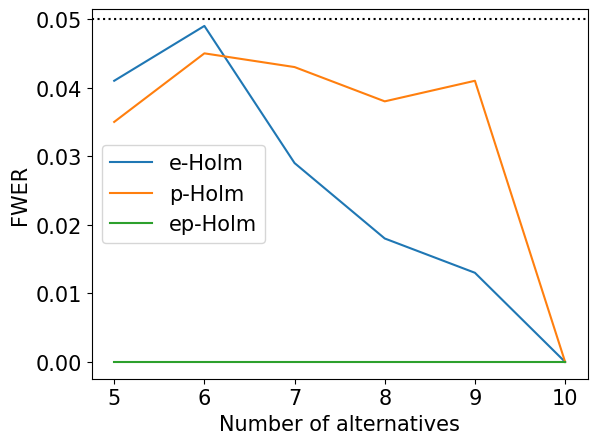}}
  \hfill
 {\includegraphics[width=0.48\textwidth]{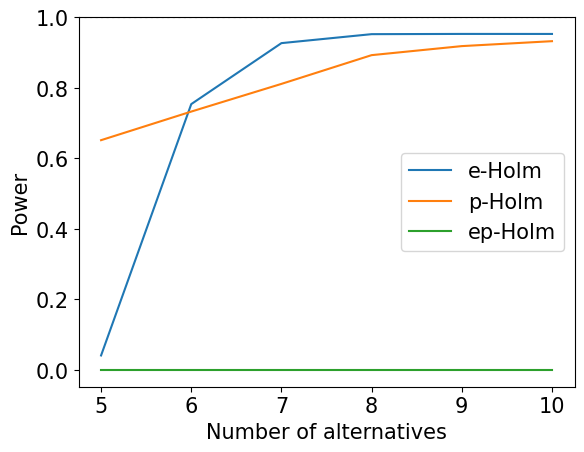}}
    \caption{Family-wise error rate and power for testing a uniform null hypothesis against a mixture uniform with probability $1-\pi$ of being above $1-c$, with $\pi=c=0.01$. The total number of hypotheses is $n=10$ and the number of non-nulls $n_1$ varies between 5 and 10. We compare e-Holm with p-Holm and ep-Holm, which is e-Holm applied to inverse-e p-values $\min(1,1/e_i)$.}
    \label{fig:unif-sim}
\end{figure}

\section{E-Fallback}\label{sec:efall}

\subsection{A naive quadratic time algorithm}
The Fallback procedure is defined by a fixed sequence of hypotheses, $H_1\rightarrow H_2\rightarrow\cdots\rightarrow H_n$ with associated e-values, and initial $\alpha$ budget $\{\alpha_i\}_{i=1}^n$ with $\sum_{i=1}^n \alpha_i=\alpha$. This is a special case of the later discussed graphical case with a chain graph. 

The Fallback procedure is useful when hypotheses have a pre-specified ordering of priority \citep{wiens2005}. The p-Fallback procedure iterates sequentially through the hypotheses, and if it can reject a p-value $p_i$ at the adjusted level that $\alpha$-budget will carry over to the next hypothesis, but disappear if it cannot be rejected. For example, if $p_1$ is rejected at level $\alpha_1$, then the rejection threshold for $p_2$ is $\alpha_1+\alpha_2$. Further, if $p_2$ is significant at this level, then the threshold for $p_3$ is $\alpha_1+\alpha_2+\alpha_3$. However, if $p_3$ is not rejected, then the threshold for $p_4$ is only $\alpha_4$.

For $|I|=k$ with $I=\{i_1,...,i_k\}$ and $i_1<i_2<\cdots<i_k$, the local test rejects $H_I$ iff
\begin{align}
e_I \triangleq \sum_{\ell=1}^k w_{i_\ell}(I)e_{i_\ell}\geq \frac{1}{\alpha}, \quad \text{where }w_{i_\ell}(I) = \sum_{i=i_{\ell-1}+1}^{i_\ell}\frac{\alpha_i}{\alpha}.
\end{align}
Above, we take $i_0=0$ for notational simplicity. This formula corresponds to redistributing $\alpha$ budget from excluded indices to the nearest subsequent included index.

We start by presenting a suboptimal yet simple algorithm to illustrate how we can apply dynamic programming to simplify the computation of the adjusted e-values $e_i^*=\min_{I\ni i}e_I$, since there is a chronological order between the hypothesis. We first prove that the adjusted e-values only depend on subsets with $i$ being the largest element. This result extends to the general DAGs (see Theorem \ref{th:dag-alg} in Section \ref{sec:dag}) and turns out to be a crucial property that enables an efficient dynamic programming approach. 
\begin{Lemma}\label{lem:remove_larger_than_i}
For each $i \in [n]$, 
\[e_i^{*} = \min_{I\ni i, i = \max_{j\in I} j}e_I.\]
\end{Lemma}

Based on this property, we give Algorithm \ref{alg:e-fallback-naive}, a dynamic programming algorithm with $O(n^2)$ runtime to compute the minimum over $I\ni i$ for each $i\in[n]$.

\begin{algorithm}
    \textit{Input:} Vectors $\{e_i\}_{i=1}^n,\{\alpha_i\}_{i=1}^n$;
    
    \textit{Initialize:} $m_0=0$;
    
    \textit{Iterate:} $m_i=\min_{0\leq j<i}\{m_j+e_i\sum_{k=j+1}^i\alpha_k\}$;
    
    \textit{Output:} $\{m_i\}_{i=1}^n$.

    \caption{E-Fallback}\label{alg:e-fallback-naive}
\end{algorithm}

We now prove that our computation of $m_i$ gives the adjusted e-values. 

\begin{Th}\label{th:efall-pf} The output of Algorithm \ref{alg:e-fallback-naive} has the property that $m_i = \alpha e_{i}^{*}$ for all $i \in [n]$. 
\end{Th}

This dynamic programming approach performs a minimization over at most $n$ values for each of $n$ indices so has $O(n^2)$ runtime. However, as we will see in the next two subsections, we can achieve an $O(n)$ runtime with some algorithmic shortcuts.

\subsection{Speedup via a reverse search algorithm}\label{sec:rev-search}

In this subsection, we show that there is an algorithm to perform the optimization of e-Fallback without searching all $i$ previous hypotheses each time we compute $m_i$. This algorithm will serve as a stepping stone to the graphical approaches with general DAGs in Section \ref{sec:dag}, as well as the tailored algorithm for e-Fallback in Section \ref{sec:stack}.

For any $i\in I\subset [i]$, instead of viewing $e_I$ as a weighted average of e-values, we can interpret it as a weighted average of initial $\alpha$-budgets $\alpha_1, \ldots,\alpha_i$, each appearing exactly once in the expression. For any such subset $I = \{i_1, \ldots, i_k\}$ with $0 = i_0 < i_1 < i_2 < \ldots < i_k = i$,  it divides $[i]$ into $k$ pieces and assigns the weight $e_{i_\ell}$ to the $\ell$-th piece $\{\alpha_{i_{\ell - 1}+1}, \ldots, \alpha_{i_{\ell}}\}$ in the formula of $e_I$. Adding an element $j\in (i_{\ell-1}, i_\ell)$ in $I$ amounts to cut the $\ell$-th chunk into two and reassigns the weight $e_j$ to $\alpha$-budgets in $\{i_{\ell-1}+1, \ldots, j\}$. To achieve the minimum, one should never add $j$ with $e_j > e_{i_{\ell}}$. This observation implies a recursive structure of the optimal subset for each hypothesis. 
\begin{Th}\label{th:efallback_optimal_I}
For each $i\in [n]$, let $j(i) = \max\{j < i: e_j \le e_i\}$ if the set is nonempty and $j(i) = 0$ otherwise. Define $I_i^*$ iteratively as follows: 
\[I_0^* = \emptyset, \quad I_{i}^{*} = I_{j(i)}^*\cup \{i\}.\]
Then 
\[e_{i}^{*} = e_{I_i^*} = \frac{1}{\alpha}\left(\sum_{j=j(i)+1}^i \alpha_j\right) e_i + e_{j(i)}^*.\]
\end{Th}

Theorem \ref{th:efallback_optimal_I} leads to the following reverse search algorithm. Again $\{m_i/\alpha\}_{i=1}^n$ is our set of adjusted e-values.

\begin{algorithm}
    \textit{Input:} Vectors $\{e_i\}_{i=1}^n,\{\alpha_i\}_{i=1}^n$;
    
    \textit{Initialize:} $m_i=0$ for all $i=0...n$;
    
    \textit{Iterate:} For $i=1$ to $n$:

    \hspace{0.9cm} \textit{Initialize:} $\alpha_i^*=\alpha_i$
    
    \hspace{0.9cm} \textit{Do:} Set $j=i-1$ and repeat until $e_j\leq e_i$ or $j=0$:

    \hspace{1.8cm} $\alpha_i^*=\alpha_i^*+\alpha_j$, $j=j-1$

    \hspace{0.9cm} $m_i=m_j+\alpha_i^*e_i$
    
    \textit{Output:} $\{m_i\}_{i=1}^n$.
    
    \caption{Reverse search for e-Fallback}\label{alg:rev-search}
\end{algorithm}

This algorithm gives a significant shortcut when the previous smaller e-value $e_j$ is close to $e_i$. If, for example, we assume that the e-values are continuous and exchangeable, then the expected runtime is $O(n)$. We state this as Theorem C.1, which we prove in Appendix C, which is in the supplementary material. However, in the worst case when the e-values are decreasing, the runtime, based on the total number of back searches, is still $O(n^2)$. In the next subsection, we will utilize a stack to recycle intermediate quantities more efficiently and reduce the computational complexity to $O(n)$.

\subsection{A tailored algorithm for e-Fallback}\label{sec:stack}
Inspecting the construction of the optimal subset $I_i^*$ in Theorem \ref{th:efallback_optimal_I}, we observe that 
\begin{equation}\label{eq:backward_cummin}
j \in I_i^* \Longleftrightarrow e_j = \min\{e_j, \ldots, e_i\},
\end{equation}
that is, $e_j$ is a cumulative minimum defined in a backward manner. $I_i^*$ is the lower envelope of the first $i$ e-values. We illustrate it in Figure \ref{fig:stack-vis}. This implies the following  iterative construction of $I_i^*$ in a similar spirit to Lemma D.1 for e-Holm. 

\begin{algorithm}
    \textit{Input:} Vectors $\{e_i\}_{i=1}^n,\{\alpha_i\}_{i=1}^n$;
    
    \textit{Initialize:} $m_i=0$ for all $i=0...n$, stack $s=\{\}$;
    
    \textit{Iterate:} For $i=1$ to $n$: 
    
    \hspace{0.9cm} \textit{Do:} Set $\alpha_i^*=\alpha_i$:

    \hspace{0.9cm} While $s\neq\{\}$ and $e_j>e_i$ for $j$ the top of stack:

    \hspace{1.8cm} Pop $(j,\alpha_j^*)$ from $s$ and add $\alpha_i^*=\alpha_i^*+\alpha_j^*$

    \hspace{0.9cm} If $s=\{\}$ set $m_i:=\alpha_i^*m_i$ else set $m_i:=\alpha_i^*e_i+m_j$ for $j$ the top of stack

    \hspace{0.9cm} Push $(i,\alpha_i^*)$ to $s$
    
    \textit{Output:} $\{m_i\}_{i=1}^n$.
    
    \caption{Stack search for e-Fallback}\label{alg:stack}
\end{algorithm}

\begin{Lemma}\label{lem:iterative_Ii*}
Suppose $I_i^*$ defined in Theorem \ref{th:efallback_optimal_I} is $\{k_{1}(i), \ldots, k_{\ell_i}(i)\}$ where $\ell_i = |I_i^*|$. Then $e_{k_1(i)}\le \ldots \le e_{k_{\ell_i}(i)}$ and 
\[I_{i}^{*} = \{i\}\cup I_{i-1}^{*}\setminus \{j\in I_{i-1}^{*}: e_j > e_i\}.\]
\end{Lemma}

The construction in Lemma \ref{lem:iterative_Ii*} can be implemented by a stack, as detailed in Algorithm \ref{alg:stack}.

\begin{figure}[htp]
    \centering
    \includegraphics[scale = 0.8]{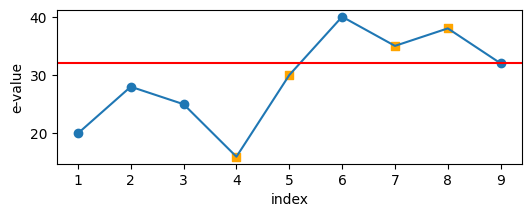}
    \caption{Visualization of stack e-Fallback Algorithm \ref{alg:stack}. For $H_8$, $I_8^*$ includes all four orange squares. At step $i=9$, the 7th and 8th squares are kicked out because their e-values are above $e_9$ and hence are no longer cumulative minima. The 4th and 5th squares are kept because $e_5$ is the most recent square whose e-value is below $e_9$. Going into step 10, $e_9$ will be added to the stack.}
    \label{fig:stack-vis}
\end{figure}

As desired, Algorithm \ref{alg:stack} has runtime $O(n)$. To see this, we break down the computation into push and pop operations. Since at most two elements are pushed at the end of each iteration, the total cost of push operations is $O(n)$. Using the representation \eqref{eq:backward_cummin}, all but the last popped elements at any iteration will not be added back to the stack again. The cost of last pop adds up to $O(n)$, while the cost of all other pops combined cannot exceed the number of hypotheses $n$. Therefore, the overall computational complexity is $O(n)$.

The existence of an $O(n)$ algorithm for e-Fallback is notable, because of the lack of the consonance that makes p-Fallback easy and intuitive to compute. A natural interpretation of p-Fallback is that a practitioner allocates their $\alpha$-budget $\{\alpha_i\}$, and each time they make a rejection, that budget accumulates and carries over to the next hypothesis, but is lost once a hypothesis cannot be rejected. 

The explicit recursive construction of the minimizing index set in Lemma \ref{lem:iterative_Ii*} is the key to the intuition for e-Fallback, and has a similar but more complicated interpretation. Since e-Fallback funnels budget to the smallest possible e-values, subject to the ordering, it carries over budget when e-values are larger than the subsequent e-values, instead of when they are rejected, as in p-Fallback. In an algorithmic sense, e-Fallback keeps budget in a more flexible manner than p-Fallback. As in our discussion of e-Holm in Section \ref{sec:e-holm-vs-p-holm}, this comparison is most applicable when comparing e-Fallback to the inverse-e p-Fallback, whereas p-Fallback with other p-values differs in other ways.

\section{E-graphical approach for general DAGs}\label{sec:dag}

\subsection{Revisiting the p-graphical approach}

A p-graphical approach \citep{bretz2009graphical} takes as input an initial $\alpha$-budget $\alpha_i$ for $H_i$, with $\sum_{i}\alpha_i=\alpha$ and a weighted directed graph $\mathcal{G} = (\mathcal{V}, \mathcal{E})$, with $\mathcal{V} = \{H_1, \ldots, H_n\}$. Each vertex encodes a hypothesis and the weight $q_{jk}$ between $H_j$ and $H_k$ is nonnegative and satisfying $\sum_{k}q_{jk}\le 1$, where $q_{jk}>0$ if and only if $(j,k)\in\mathcal{E}$. Throughout the paper we assume the graph $\mathcal{G}$ is given. It encodes logical constraints or relative importance of the hypotheses that depend on the domain knowledge. We refer to \cite{bretz2009graphical} for guidance on the choice of the graph and common examples in practice. 

To define the weights $w_i(I)$ for the local test $H_I$, expand the graph with an additional vertex, denoted by $H_{n+1}$, with graph weight $q_{j(n+1)} = 1 - \sum_{k}q_{jk}$ and $q_{(n+1)j} = \mathbf{1}\{j = n+1\}$.  Consider the random walk $Z_0, Z_1, \ldots$ on the expanded graph with
\[\mathbb{P}(Z_0 = H_i) = \frac{\alpha_i}{\alpha}, \quad \mathbb{P}(Z_{t+1} = H_k\mid Z_{t} = H_j) = q_{jk}.\]
Then, the weight $w_i(I)$ is chosen as 
\begin{equation}\label{eq:weight_graphical}
w_i(I)= \mathbb{P}(Z_{t_I} = H_i)
\end{equation}
where $t_I = \min\{t: Z_t\in \{H_i: i\in I\}\}$. This definition of weight redistribution is equivalent to sequentially removing vertices and redistributing the removed vertices' weight according to the graph transition probabilities. This derivation comes from a recursive absorption probability computation \citep{sheskin1991computing}.

As mentioned in Section \ref{sec:algorithmic_contribution}, the closure of the above local tests can be computed in polynomial time due to the consonance property \citep{gabriel1969simultaneous, romano2011consonance}, which states that, for any $I \subset [n]$, $H_I$ is rejected if there exists $i\in I$ such that $H_J$ is rejected for any $i\in J \subset I$. To see this, $w_i(I)\le w_i(J)$ for any $J \subset I$, as any such index $i$ included in both sets cannot lose weight when other indices are removed from $J$. When $H_I$ is rejected, there must be an $i\in I$ such that $p_i\le \alpha w_i(I)$. Then, for any $i\in J \subset I$, $p_i \le \alpha w_i(J)$  and hence $H_J$ is rejected. The graph is allowed to be cyclical. 

However, the weighted e-Bonferroni local tests are not consonant. For example, in the context of e-Holm, with three hypotheses, desired level $\alpha=0.05$, and e-values $(e_1,e_2,e_3)=(25, 25, 10)$, the subset $\{1,2,3\}$ is rejected as the average e-value is $e_{\{1,2,3\}}=20\geq\frac{1}{\alpha}$. However, there is no $i$ that satisfies the definition of the consonance property; in particular, $e_{\{1,3\}}=e_{\{2,3\}}=17.5<\frac{1}{\alpha}$.

This difference in the consonance property means that it is more difficult to compute the closure of e-Bonferroni under a graphical weight structure, which we call the e-graphical approach. Some cyclical graphs have appropriate structure for a shortcut, such as the complete graph with equal initial weights, which is the e-Holm procedure from Section \ref{sec:e-holm}, and graphs which are suitably close to acyclical, which we discuss in the appendix.

In this section, we focus on a broad class of directed acyclical graphs (DAGs), of which the Fallback graph from Section \ref{sec:efall} is an example. These graphs have an inherent shortcoming because excluded vertices may distribute their weight to the additional $H_{n+1}$ vertex, but they can be more interpretable, especially in scenarios in which a primary node has to precede secondary nodes, such as in the factorial design example in Section \ref{sec:sims}. Note, however, that one can always compute the e-graphical approach by brute force for any graph.

\subsection{Dynamic programming for e-DAG graphical approaches}
For a DAG, the hitting probabilities can be simplified using the unidirectional flow of the graph. To facilitate the computation, for any subset $I$ define the set of paths from $j\notin I$ to $i\in I$ through the complement of $I$: \begin{equation}
    \mathcal{P}_{j,i}^{(I)}=\{(i_0,...,i_k):k\geq0,i_0=j,i_k=i,i_\ell\notin I\,\,\forall\ell\in[k-1],q_{i_{\ell-1},i_\ell}>0\,\,\forall\ell\in[k]\}.
\end{equation}

Note the positivity of a graph weight ($q_{i_{\ell-1},i_\ell}>0$) indicates the presence of an edge. Implicit in this definition is that $i\in I,j\notin I$. Also define $k(p):=k$ the length of a path $p=(i_0,...,i_k)\in\mathcal{P}_{j,i}^{(I)}$. Then,

\begin{equation}\label{eq:weight_DAG}
w_i(I) = \frac{1}{\alpha}\left(\alpha_i+\sum_{j\notin I}\alpha_j\sum_{p\in\mathcal{P}_{j,i}^{(I)}}\prod_{\ell=1}^{k(p)}q_{i_{\ell-1},i_\ell}\right).
\end{equation}
We may think of the local test weight corresponding to redistributing the weight ($\alpha_j/\alpha$) from each excluded vertex $j$ proportionally according to the graph structure. With the weight $w_i(I)$ defined in \eqref{eq:weight_DAG}, the local test for e-DAG graphical approaches rejects $H_I$ iff
\begin{equation}\label{eq:eDAG_local_test}
e_I \triangleq \sum_{i\in I}w_i(I)e_i\geq\frac{1}{\alpha}.
\end{equation}

For a general DAG, we notice a local phenomenon that we claim will extend to an algorithm that computes the global minimum for any index $i$. We first establish an equivalence between the local test e-value $e_I$ as a weighting of e-values $e_i$ for $i\in I$, and a weighting of alpha values $\alpha_j$ for $j\in[n]$. 

\begin{Def}\label{def:e_j^I} For any subset $I\subset[n]$ and hypothesis index $j\in[n]$, define \begin{align}
    e_j^{(I)}\triangleq\begin{cases}
        \sum_{i\in I}e_i\left(\sum_{p\in\mathcal{P}_{j,i}^{(I)}}\prod_{\ell=1}^{k(p)}q_{i_{\ell-1},i_\ell}\right) & j\notin I \\
        e_j & j \in I
    \end{cases}
\end{align}
When no path exists, we take $e_j^{(I)} = 0$.
\end{Def}

Intuitively, we give this definition to give structure to the way in which an excluded vertex $i\notin I$ will have its alpha-budget $\alpha_j$ redistributed to the e-values in $I$ via a reweighting based on the graph structure. Formally, we prove this relationship in Lemma \ref{lem:e_j^I}.

\begin{Lemma}\label{lem:e_j^I} For $e_j^{(I)}$ as defined in Definition \ref{def:e_j^I} and subset $I\subset[n]$, we have
\begin{align}
    e_I=\sum_{i\in I}w_i(I)e_i=\frac{1}{\alpha}\sum_{j=1}^n\alpha_je_j^{(I)}
\end{align}
\end{Lemma}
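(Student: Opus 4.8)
The plan is to prove the identity
\[
e_I = \sum_{i\in I}w_i(I)e_i = \frac{1}{\alpha}\sum_{j=1}^n \alpha_j e_j^{(I)}
\]
by substituting the explicit formula \eqref{eq:weight_DAG} for $w_i(I)$ into the left-hand sum and then interchanging the order of summation to regroup the terms by the source budget $\alpha_j$ rather than by the target e-value $e_i$. The two expressions are really the same double sum read in two different ways: the weight formulation indexes outer by $i\in I$ (target) and inner by $j\notin I$ (source), while Definition~\ref{def:e_j^I} indexes outer by $j$ (source) and inner by $i\in I$ (target). So the heart of the proof is a bookkeeping argument showing these regroupings agree.

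First I would write out $\sum_{i\in I}w_i(I)e_i$ using \eqref{eq:weight_DAG}, which gives
\[
\sum_{i\in I} e_i \cdot \frac{1}{\alpha}\left(\alpha_i + \sum_{j\notin I}\alpha_j \sum_{p\in\mathcal{P}_{j,i}^{(I)}}\prod_{\ell=1}^{k(p)}q_{i_{\ell-1},i_\ell}\right).
\]
I would then split this into the diagonal part $\frac{1}{\alpha}\sum_{i\in I}\alpha_i e_i$ and the off-diagonal part $\frac{1}{\alpha}\sum_{i\in I}\sum_{j\notin I}\alpha_j e_i \sum_{p\in\mathcal{P}_{j,i}^{(I)}}\prod_{\ell}q_{i_{\ell-1},i_\ell}$. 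In the second term I would swap the finite sums over $i\in I$ and $j\notin I$, factoring out $\alpha_j$, so that the inner object becomes exactly $\sum_{i\in I}e_i\big(\sum_{p\in\mathcal{P}_{j,i}^{(I)}}\prod_\ell q_{i_{\ell-1},i_\ell}\big)$, which is precisely $e_j^{(I)}$ for $j\notin I$ by Definition~\ref{def:e_j^I}. The diagonal term matches the $j\in I$ case of the definition, where $e_j^{(I)}=e_j=e_i$. Combining the two pieces reassembles $\frac{1}{\alpha}\sum_{j=1}^n\alpha_j e_j^{(I)}$, as claimed.

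The only genuine content beyond the interchange of summation is verifying that the matching of cases is exhaustive and that the ``no path'' convention is consistent: for $j\notin I$ with $\mathcal{P}_{j,i}^{(I)}=\emptyset$ for every $i$, the inner path sum is empty and $e_j^{(I)}=0$, contributing nothing, which is exactly what the weight formula produces since that $\alpha_j$ never reaches any $H_i\in I$. I would state this explicitly to close the argument. I anticipate the main obstacle to be purely notational rather than conceptual: one must keep careful track of which index runs where, confirm that the path-product weights appearing in \eqref{eq:weight_DAG} and in Definition~\ref{def:e_j^I} are literally identical (same set $\mathcal{P}_{j,i}^{(I)}$ and same product $\prod_{\ell=1}^{k(p)}q_{i_{\ell-1},i_\ell}$), and handle the $j\in I$ versus $j\notin I$ split cleanly. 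Since all sums are finite, the interchange requires no convergence justification, so once the indexing is aligned the identity follows immediately.
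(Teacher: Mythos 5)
Your proposal is correct and follows essentially the same route as the paper's proof: expand $\sum_{i\in I}w_i(I)e_i$ via \eqref{eq:weight_DAG}, split off the diagonal $\frac{1}{\alpha}\sum_{i\in I}\alpha_i e_i$ term, interchange the finite sums over $i\in I$ and $j\notin I$, and recognize the inner sum as $e_j^{(I)}$ from Definition \ref{def:e_j^I}. Your explicit handling of the empty-path convention is a small point the paper leaves implicit, but otherwise the two arguments are identical.
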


The result of Lemma \ref{lem:e_j^I} is useful as it shows that if we can find a subset $I\ni i$ that minimizes $e_j^{(I)}$ for all $j$, then we have found the minimizing subset, $\mathrm{argmin}_{I\ni i}e_I$.

To facilitate the upcoming proof of our algorithm's validity, we also show the following Lemma, which says that we may recursively define $e_j^{(I)}$ by its children's values.

\begin{Lemma}\label{lem:e_j^I-rec}

For $j\notin I$,
\begin{align*}
    e_j^{(I)}=\sum_{(j,k)\in E}q_{j,k}e_k^{(I)}
\end{align*}
\end{Lemma}

We assume that the nodes are already topologically ordered with respect to the input graph, which is a reasonable constraint given in practice a practitioner will be designing the graph themselves so will be able to design the data collection to have the nodes ordered properly. 

We give the extension of the reverse search Fallback algorithm from Section \ref{sec:rev-search} to the general DAG in Algorithm \ref{alg:dag}. The intuition is that we can look solely at the ancestor graph, and search backward to decide whether or not to include each node, thereby reconstructing the minimizing subset for each node $i$.

\begin{algorithm}
    \textit{Input:} Vectors $\{e_i\}_{i=1}^n,\{\alpha_i\}_{i=1}^n$, transition matrix $(q_{ij})\in\mathbb{R}^{n\times n}$, where $[n]$ is a \\
    \hspace{1.2cm} topological ordering with respect to the graph $\mathcal{G}=(\mathcal{V},\mathcal{E})$;

    \textit{Iterate:} For $i\in[n]$:

    \hspace{0.9cm} \textit{Do:} Initialize e-assignments $e_j^{(i)} = 0\,\, (j\in[n])$, $m_i=0$, compute $A_i$ the  \\
    \hspace{1.8cm} set of ancestors of $i$, including $i$;

    \hspace{0.9cm} Initialize $e_i^{(i)}=e_i$, $m_i=\alpha_i e_i$:
    
    \hspace{0.9cm} For $i\neq j\in A_i$ decreasing $e_j^{(i)}=\min\left( e_j,\sum_{(j,k)\in E,k\in A_i}q_{jk}e_k^{(i)}\right)$ and set
    
    \hspace{0.9cm} $m_i=m_i+\alpha_je_j^{(i)}$.

    \textit{Output:} $\{m_i\}_{i=1}^n$
    
    \caption{e-DAG graphical approach}\label{alg:dag}
\end{algorithm}

\begin{Th}\label{th:dag-alg} In the context of Algorithm \ref{alg:dag}, for every index $i$ and subset $I\ni i$, $e_j^{(I)}\geq e_j^{(i)}$. Moreover, $e_j^{(i)}=e_j^{(I_i^*)}$ for some $I_i^*\ni i$, which is therefore the minimizing subset.
As a result, $e_i^* = m_i / \alpha$.    
\end{Th}

In Algorithm \ref{alg:dag}, for each node $i$, we recompute the e-value assignments for all $j\leq i$, each of which is a comparison of $e_j$ to a weighting of its children's assignments. The total computational complexity is $O\left( \sum_i N_i\right)$ where $N_i$ is the number of the edges in the ancestor graph $A_i$. This yields a worst-case complexity bound $O(n|\mathcal{E}|) = O(n^3)$ because $N_i \le |\mathcal{E}| = O(n^2)$. This worst-case bound can be improved for special graphs. For example, when $\mathcal{G}$ is a tree graph, $N_i$ is simply the depth of node $H_i$. For a balanced binary tree, $\max_i N_i = O(\log n)$, and thus the complexity is $O(n \log n) <\!\!< O(n|\mathcal{E}|)$. It may be possible to further improve the complexity by re-using intermediate calculations like Algorithm \ref{alg:stack}. For example, in the tree setting, each reverse search is on a chain, so each reverse search can use the shortcut from Algorithm \ref{alg:rev-search}, and in particular we believe one can use a depth-first search along with a stack, following Algorithm \ref{alg:stack}, to achieve an $O(n)$ runtime. We leave further developments for future research. 

\section{Simulation results}\label{sec:sims}

\subsection{Sequential setting}

In this section we develop a framework for comparing the power of our e-value procedures to that of the corresponding p-value procedures. Without further restrictions, \cite{vovk2021values} suggest that the only valid p-values are given by inverse e-values. Thus, our e-graphical approaches are strictly more powerful than the corresponding p-graphical approaches. To make the comparison more informative, we work in the sequential testing setting, where we can derive a valid p-value that is more powerful than the inverse e-value \citep{johari2022always}. 

We present results for power in terms of how often the e-procedure improves on the p-procedures, as well as by what percentage in terms of ratio of stopping times. Both give useful information for gauging how meaningful the potential improvement of e-Holm is in the sequential setting. We examine the Holm's procedure described in Section \ref{sec:e-holm} and the graphical procedure described in Section \ref{sec:dag}, specifically for a useful structuring of hypotheses in a factorial design.

For both of our simulation settings, we will be operating in a setting of a continuous stream of Gaussian data $Y_{it}\overset{i.i.d.}{\sim}\mathcal{N}(\mu_i,1)$, with known variance 1. For arm $i$, i.i.d. data comes in at times $t=1,2,...$, with the same mean $\mu_i$. In Section \ref{sec:eholm-sim}, we will be testing each arm separately, while in Section \ref{sec:edag-sim} we will be testing differences of means of arms. In either case, the way we will construct test martingales is through the sequential probability ratio test (SPRT) with known alternative, as in \cite{johari2022always}. In particular, let $\mathcal{F}_t$ denote the $\sigma$-field generated by $\{(X_{1j}, \ldots, X_{nj}): j\le t\}$. For any given predictable sequence $\hat{\mu}_s\in \mathcal{F}_{s-1}$, let, 
\begin{align}\label{eq:test_martingale} e_{it}=\prod_{s=1}^t\frac{\phi(Y_{is}-\hat{\mu}_i)}{\phi(Y_{is})}=\prod_{s=1}^t\exp(\hat{\mu}_i Y_{is}-0.5\hat{\mu}_i^2),
\end{align}

where $\phi$ is the pdf (likelihood) of the Gaussian distribution. Clearly, $e_{it}$ is a martingale with $\mathbb{E}[e_{it}] = 1$. By optional stopping theorem, for any stopping time $\tau$ with respect to the filtration $\mathcal{F}_t$, $\mathbb{E}[e_{i\tau}] \le 1$. As a result, $e_{i\tau}$ is a valid e-value and, hence, $1/e_{i\tau}$ is a valid p-value. By Ville's inequality \eqref{eq:villes} \citep{ville1939etude}, $1/\max_{t\le\tau}e_{it}$ is also a valid p-value. \cite{johari2022always} call it an always-valid p-value. We will perform e-graphical approaches on $e_{i\tau}$ and p-graphical approaches on $1/e_{i\tau}$ and $1/\max_{t\le\tau}e_{it}$, where $\tau$ is an appropriately defined stopping time, adapted to $\mathcal{F}_t$. 

The multiple testing procedures, when applied to the e-processes, provide always-valid FWER control, similar to reasoning from \cite{johari2022always}. Because $\{e_{it}\}_{t\geq1}$ is a valid martingale adapted to $\mathcal{F}_t$ for all $i\in[n]$, the stopped processes $\{e_{i\tau}\}_{i\in[n]}$ are e-values and the closure principle will provide FWER control. Any stopping criterion that uses the adjusted e-values from current or previous times $t$ is a function of the processes, and so is still adapted to the full filtration $\mathcal{F}_t$. However, if some processes are adapted to sub-filtrations, the validity may be an issue since the stopped processes may not all be e-values. \cite{choe2024combining} discuss examples of encountering and correcting for this issue.

\subsection{e-Holm simulations}\label{sec:eholm-sim}

Our setting is 20 hypotheses $H_i:\mu_i=0$, for i.i.d. data $Y_{it}\sim\mathcal{N}(\mu_i,1)$, where the known alternative for each is $\mu_{alt}$, which we vary on $\{0.5, 1, 1.5, 2\}$. For our simulation, there are 5 true alternatives. 
We define the test martingale by \eqref{eq:test_martingale} with $\hat{\mu}_i = \mu_{alt}$. 

\vspace{0.2cm}
\noindent \textbf{Time to first rejection. } First, we compare  the stopping time at which at least one hypothesis can be rejected, after accounting for multiplicity (via each procedure). 
In terms of adjusted e-values $\{e_{it}^*\}_{i=1}^n$, this stopping time is $\tau_e=\min\{t:\max_{i\in[n]}e_{it}^*\geq1/\alpha\}$, and we define $\tau_{ep}$ and $\tau_p$ analogously for the 1/e p-values and always-valid p-values.

For each value of $\mu_{alt}$, we run 1000 Monte Carlo simulations, where we save the stopping times $\tau_e,\tau_{ep},\tau_p$, all computed with the same sequence of simulated data. For practical purposes there is a maximum number of iterations of 2000, though no run reaches this threshold. Because we stop once the most significant hypothesis is rejected, its test martingale is at its maximum, and thus $\tau_e\leq \tau_p$ always, and $\tau_p=\tau_{ep}$. This follows from the discussion of the comparison between e-Bonferroni and p-Bonferroni in Section \ref{sec:eval_intro}. Thus, we only compare $\tau_e$ and $\tau_p$, and compute for $m=1000$ iterations and empirical distribution $\mathbb{P}_m$:

\begin{itemize}
    \item Comparison (probability of improving on p-Holm): $\mathbb{P}_m(\tau_e < \tau_p)$
    \item Ratio (conditional on there being a difference): $\mathbb{E}_m[\tau_e/\tau_p\mid\tau_e\neq \tau_p]$
\end{itemize}

For the former, shown in the left panel of Figure \ref{fig:holm-sim}, the e-Holm comparison is always non-negative, the greater the better. For the latter, shown in the right panel of Figure \ref{fig:holm-sim}, the e-Holm ratio is always no greater than 1, the lower the better. Both metrics are necessary to understand the efficiency of e-Holm, as a small ratio is meaningless with a small comparison.

The probability comparison in Figure \ref{fig:holm-sim} shows that the biggest improvement in likelihood of beating p-Holm comes for smaller $\mu_{alt}$, which makes intuitive sense as the stopping times are larger, leaving more margin for improvement. This reasoning also applies to the results in the right panel of Figure \ref{fig:holm-sim}, which show that the lower the stopping time the lower the potential the ratio has to be, as low as 0.6 on average for $\tau_e\neq \tau_p$ when $\mu_{alt}=2$. This intuitive rationalization aside, it is promising that all offer improvements of at least $0.05$ in terms of $\BP_m(\tau_e < \tau_p)$ and about $10\%$ for $\BE_m[\tau_e/\tau_p\mid\tau_e\neq \tau_p]$.

\vspace{0.2cm}
\noindent \textbf{Fixed-sample FWER and power. } Next, we compare the FWER and power for e-Holm after $25$ observations per hypothesis in Figure \ref{fig:holm-sim-fwer-power}. We also compare with em-Holm, which is e-Holm applied to the e-process maxima $\tilde e_{it}=\max_{s\leq t}e_{is}$ which are pseudo e-values, and em($\alpha'$)-Holm, which is adjusted such that $B_K(\alpha')=\alpha$ in the notation of Section \ref{sec:pareto}. In Figure \ref{fig:holm-sim-relative-power} we provide a zoomed in version of the power plot for clarity. As suggested by the theory, e-Holm outperforms ep-Holm and em-Holm outperforms p-Holm. Interestingly, em-Holm with the $\alpha'$ correction has similar power to p-Holm.

\begin{figure}[!tbp]
  \centering
  
{\includegraphics[width=0.48\textwidth]{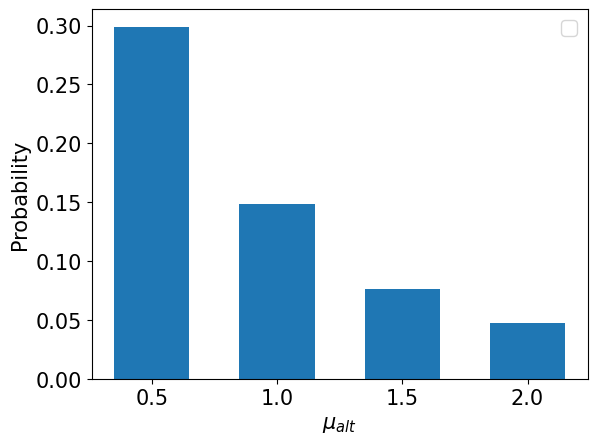}
  \hfill
 {\includegraphics[width=0.48\textwidth]{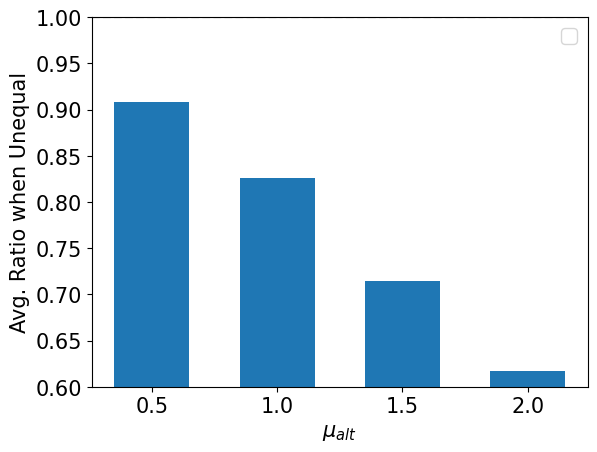}\label{holm_ratio}}
  \caption{Left: Empirical comparison from $m=1000$ of the probability of e-Holm improving on p-Holm. Right: Empirical ratio of stopping times, when they differ, between e-Holm and p-Holm using always-valid p-values (blue). Both comparisons are for a stopping time of the first rejection out of 20 hypotheses.}\label{fig:holm-sim}}
\end{figure}

\begin{figure}[!tbp]
  \centering
  
{\includegraphics[width=0.48\textwidth]{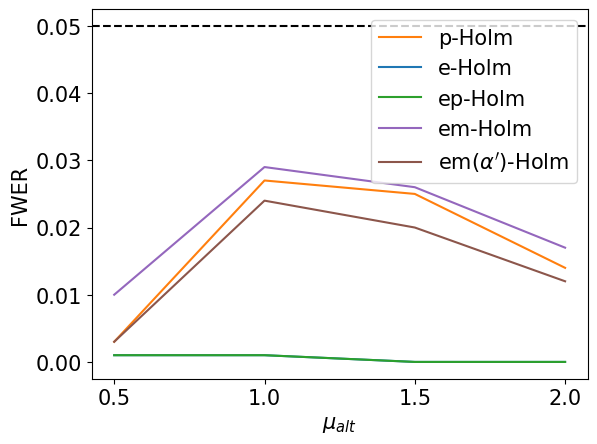}
  \hfill
 {\includegraphics[width=0.48\textwidth]{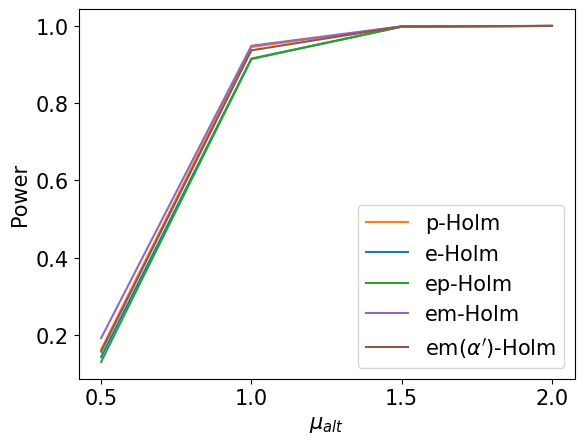}}
  \caption{FWER (left) and power (right) from $m=1000$ iterations of the p-Holm, e-Holm, ep-Holm, em-Holm, and adjusted em($\alpha'$)-Holm procedures at a fixed time of 25 steps.}\label{fig:holm-sim-fwer-power}}
\end{figure}

\begin{figure}
    \centering
    \includegraphics[width=0.5\linewidth]{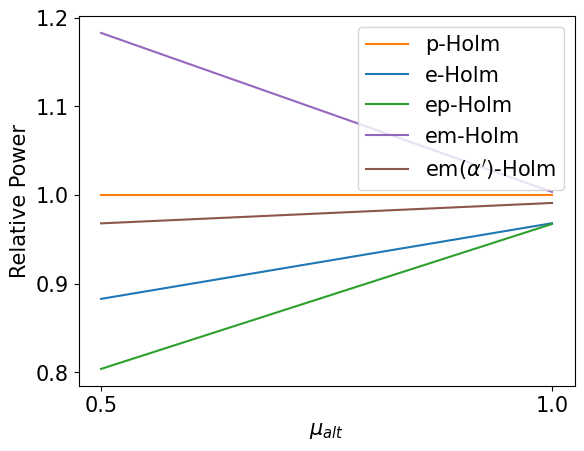}
    \caption{A relative version of the power plot from Figure \ref{fig:holm-sim-fwer-power} for only values $\mu_{alt}\in\{0.5,1\}$, where p-Holm is the baseline. This plot is included for clarity since the values are close together. The powers for all methods at $\mu_{alt}\in\{1.5,2\}$ are essentially 1.}    \label{fig:holm-sim-relative-power}
\end{figure}

\subsection{e-DAG simulations}\label{sec:edag-sim}

We continue our simulation study with a similar framework under a different model, this time with a structure to our hypotheses in comparison to the agnostic treatment of hypotheses in Section \ref{sec:eholm-sim}. We use the example of a factorial design, which is a useful framework for testing not just primary nodes, but also secondary, tertiary, etc. nodes. For example, one might be interested in the effect of a drug and the effect of gender on outcomes as primary node, but also the interaction effect of the drug for a particular gender, a secondary node.

To illustrate this framework, consider a simple three-factor model, say for A/B testing in a tech application, with three proposed treatments. The model is that data $Y$ is drawn with treatments $X\in\{0,1\}^3$ according to
\begin{align}\label{eq:three_factor}
Y=\beta_0&+\beta_1X_{1}+\beta_2X_{2}+\beta_3X_{3}\nonumber\\
&+\beta_{12}X_{1}X_{2}+\beta_{23}X_{2}X_{3}+\beta_{13}X_{1}X_{3}\\&+\beta_{123}X_{1}X_{2}X_{3}+\epsilon,\nonumber
\end{align}
where $\epsilon\sim\mathcal{N}(0,\sigma^2)$ is an i.i.d. Gaussian error term. In practice one might use a more complicated model but for simplicity we take this linear model with fixed effects and i.i.d. errors with known variance (which we take to be 1). Here the hypotheses $H_j:\beta_j=0$ correspond to primary nodes, $H_{jk}:\beta_{jk}=0$ correspond to secondary interaction nulls, and $H_{123}:\beta_{123}=0$ corresponds to the tertiary interaction null. 

In an experiment, the practitioner likely cares mostly about the primary hypotheses, but would like to test the secondary (and higher order) hypotheses as well. An approach is to use the graphical procedure to define a graph in which each primary hypothesis is a root, whose children are the secondary hypotheses containing the primary hypothesis. The children of $H_1$ are $H_{12}$ and $H_{13}$ in our three-way example, which is visualized in Figure \ref{factorial_image}.

\begin{figure}
    \centering
    \includegraphics[scale = 0.225]{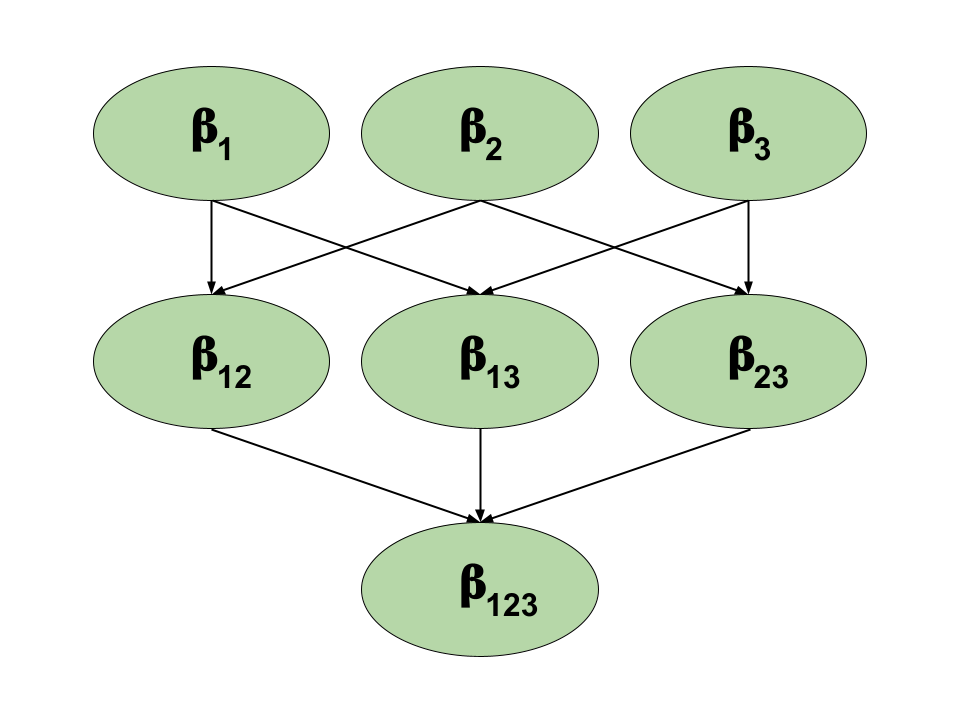}
    \caption{The transition graph for the graphical approach in a factorial design with three primary nodes.}
    \label{factorial_image}
\end{figure}

In the context of the graphical approach, it is most common to take the transition probabilities to be uniform, and the starting $\alpha$ budget to be $\alpha/p$ for each of the $p$ primary endpoints, in our case $\alpha/3$ for each of $H_1,H_2,H_3$. The result of this choice is that, in terms of rejecting any primary hypothesis, we use Bonferroni to determine if any node can be rejected. Once one of these nodes can be rejected, its budget transfers to the corresponding secondary hypotheses, which might be rejected, though they require a higher level of significance. The ancestor graph of $H_i$ is just $H_i$, so we can only reject $H_i$ if its standalone test martingale can reject at level $\alpha/3$. 

The upside is that rejecting, for example, $H_{13}$, becomes potentially easier because we can use e-Bonferroni as the local test, as if $H_1$ is rejected, we now can boost the adjusted e-value for $H_{13}$ via the value of $H_3$, if it remains unrejected. However, it is unclear a priori whether this, or the fact that the always-valid p-value gets to use the max value of the test martingale, will win out in various settings. 

In order to investigate this trade-off, we perform a similar simulation study to Section \ref{sec:eholm-sim}, with our three-way factorial design model. It is the same in that we have $m=1000$ iterations, and we vary $\beta_{13}\in\{0.5,1,1.5,2\}$. Our stopping time is the first rejection of an interaction hypothesis, where $\beta_1,\beta_3,\beta_{13}$ are the only nonzero parameters. We fix $\beta_1=\beta_3=0.5$, with this alternative being known, and $\beta_{13}$ being unknown. 

We consider a process in which at time $t$, we receive $2^3=8$ measurements corresponding to draws of our model with each of $X\in\{0,1\}^3$, which we denote with binary indices. In particular, for a given time $t$, we observe $\{Y_{tijk}\}_{i,j,k\in\{0,1\}}$, which follows the distribution \eqref{eq:three_factor} according to $Y\mid X=(i,j,k)$. For example, $Y_{t101}\sim\beta_1+\beta_3+\beta_{13}+\epsilon_{t101}$ where each error $\epsilon$ is assumed to be independent.

We can then isolate each coefficient (plus mean zero noise) from the data by taking combinations of our time $t$ data. For example, $Y_{t100}-Y_{t000}\sim\mathcal{N}(\beta_1,2)$, and similar for the other primary endpoints. Then to isolate $\beta_{13}$, we can take

\begin{align}
    Y_{t101}-Y_{t100}-Y_{t001}+Y_{t000}\sim \mathcal{N}(\beta_{13},4).
\end{align}

Because we are adding and subtracting more data points, the variance increases. This procedure generalizes via a pattern that follows the inclusion-exclusion principle. This estimator is exactly the result of an ordinary least squares on an $8\times 8$ covariate matrix with each interaction term. As such, it is the maximum likelihood estimate of the model with homoskedastic, known variance, and is optimal in the UMVUE sense. Then for each of these test statistics, we can construct a test martingale via a SPRT for an alternative mean, which for us is known for the primary hypotheses but not the interaction hypotheses $H_{12},H_{13},H_{23}$, for which we predictably estimate $\hat{\mu}_t$ based on the sample average before time $t$.

\vspace{0.2cm}
\noindent \textbf{Time to first rejection. } With these test martingales defined, we define our stopping time to be the time that some interaction hypothesis is rejected, after accounting for multiplicity via e-DAG/ep-DAG/p-DAG, i.e. $\tau_e=\min\{t:\max(e_t^{12*},e_t^{13*},e_t^{23*})\geq1/\alpha\}$, for $e_t^{13*}$ the adjusted e-value for $H_{13}$, etc., and analogously defined for $\tau_p$ and $\tau_{ep}$. In our simulations, we take $\beta_1=\beta_3=0.5$ and $\beta_{13}>0$ varies, with all other coefficients equal to 0.

As in Section \ref{sec:eholm-sim}, we record the probability of $\tau_e$ improving on $\tau_p$, but now it could be worse, and $\tau_p,\tau_{ep}$ could differ, so we record three comparisons: $\BP_m(\tau_e<\tau_p),\BP_m(\tau_e>\tau_p)$, and $\BP_m(\tau_e<\tau_{ep})$. Since e-values have other documented pros besides our procedure as outlined in Section \ref{sec:eval_intro}, and demonstrated in our e-value-based closed testing results in Section \ref{sec:e-value-closed-testing}, knowing the latter is relevant. We also record the ratio conditional on a difference, both for $\tau_p$ and $\tau_{ep}$, because they might differ. In notation, these quantities are $\BE_m[\tau_e/\tau_p\mid\tau_e\neq \tau_p]$ and $\BE_m[\tau_e/\tau_{ep}\mid\tau_e\neq \tau_{ep}]$.

We do two simulations. The first is for primary budget, meaning each of $H_1,H_2,H_3$ is assigned budget $\alpha/3$, and no budget for the secondary and tertiary hypotheses. We also perform the same analysis for equal budget among all seven hypotheses, which allows $H_{13}$, the subject of the simulation study, to be rejected even if $H_1$ and $H_3$ are both unrejected. This might be desirable if the secondary hypothesis is of interest separately from the primary ones. The probability comparisons are given in Figures \ref{fig:DAG_prob-p} and \ref{fig:DAG_prob-ep} and the stopping time ratios are given in Figure \ref{fig:DAG_ratio}.

In general, rejection tends to be faster for e-DAG for higher values of $\beta_{13}$, with the only favorable ratio occurring for $\beta_{13}=2$. However, the results show that the gains over ep-DAG are consistent and substantial, with settings in which e-DAG is able to improve upon p-DAG.
Just like the simulations show, the plots with equal budget experiment generally follow similar trends to the plots for primary budget, but with improvements across the board. Notably, the probability of p-DAG beating e-DAG becomes very low for $\beta_{13}=2$, with a corresponding ratio of less than 0.95. Both simulation settings (primary and equal budget for factorial design) show that e-DAG can consistently be a notable improvement on ep-DAG, while having mixed results compared to p-DAG, albeit with many promising settings.

\vspace{0.2cm}
\noindent \textbf{Fixed-sample FWER and power. }We also show the power in a fixed sample size of $50$ steps in Figure \ref{fig:DAG_power}. 

In this case the coefficients that are true alternatives are $\beta_1,\beta_3,\beta_{13}$. Because the only parameter changing between instances of the simulation is $\beta_{13}$, the p-DAG FWER remains the same for all coefficients, at $0.006$ for primary budget and at $0.006$ for equal budget, while the FWER for e-DAG and ep-DAG is $0.002$ for primary budget and $0.001$ for equal budget. 

\begin{figure}[!tbp]
  \centering
  
{\includegraphics[width=0.48\textwidth]{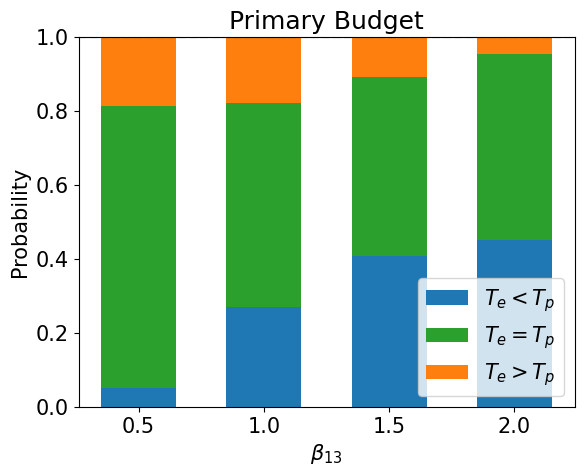}\label{PB_prob-p}}
  \hfill
 {\includegraphics[width=0.48\textwidth]{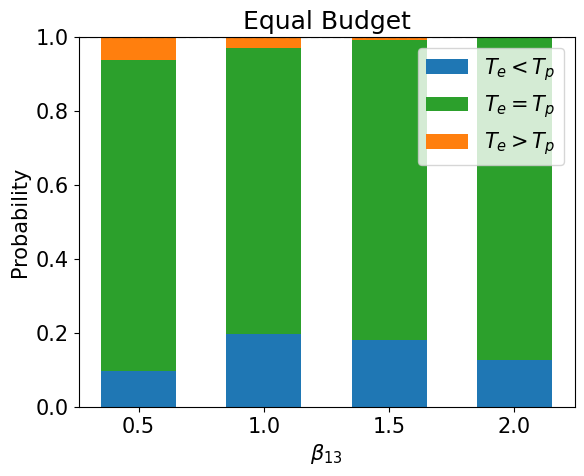}\label{EB_prob-p}
  \caption{Empirical probability from $m=1000$ iterations of the e-DAG stopping time being less than (blue), equal to (green), or greater than (orange) p-DAG. The stopping time is when we reject any secondary hypothesis in a factorial design graphical model. With primary budget and equal budget respectively.}\label{fig:DAG_prob-p}}
\end{figure}

\begin{figure}[!tbp]
  \centering
  
{\includegraphics[width=0.48\textwidth]{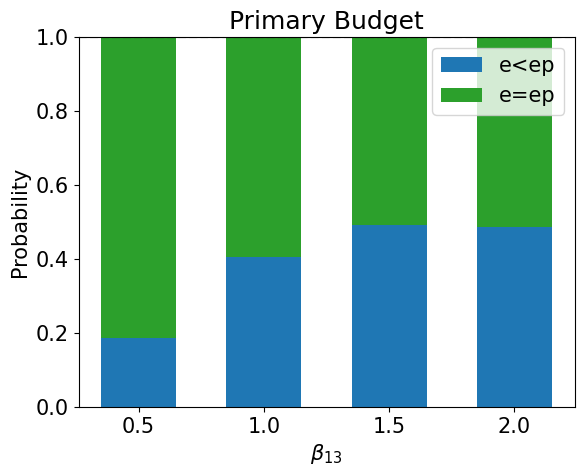}\label{PB_prob-ep}}
  \hfill
 {\includegraphics[width=0.48\textwidth]{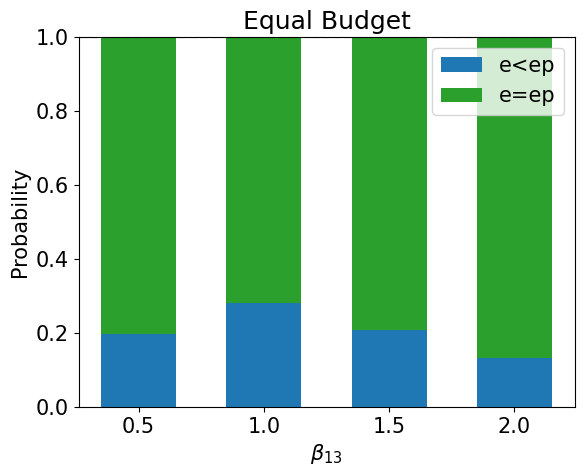}\label{EB_prob-ep}
  \caption{Empirical probability from $m=1000$ iterations of the e-DAG stopping time being less than (blue), equal to (green) ep-DAG. ep-DAG can never reject faster than e-DAG by construction. The stopping time is when we reject any secondary hypothesis in a factorial design graphical model. With primary budget and equal budget respectively.}\label{fig:DAG_prob-ep}}
\end{figure}

\begin{figure}[!tbp]
  \centering
  
{\includegraphics[width=0.48\textwidth]{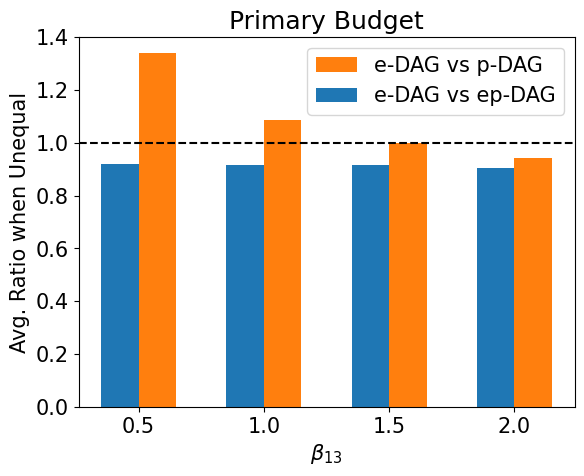}\label{PB_ratio}}
  \hfill
 {\includegraphics[width=0.48\textwidth]{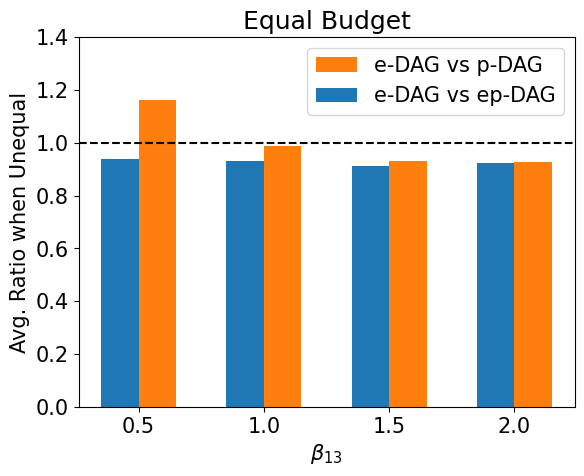}\label{EB_ratio}}
  \caption{Empirical ratio from $m=1000$ iterations of the stopping times, when they differ, from Figures \ref{fig:DAG_prob-p} and \ref{fig:DAG_prob-ep} for e-DAG against p-DAG (orange) and ep-DAG (blue). With primary budget and equal budget respectively.}\label{fig:DAG_ratio}
\end{figure}

\begin{figure}[!tbp]
  \centering
  
{\includegraphics[width=0.48\textwidth]{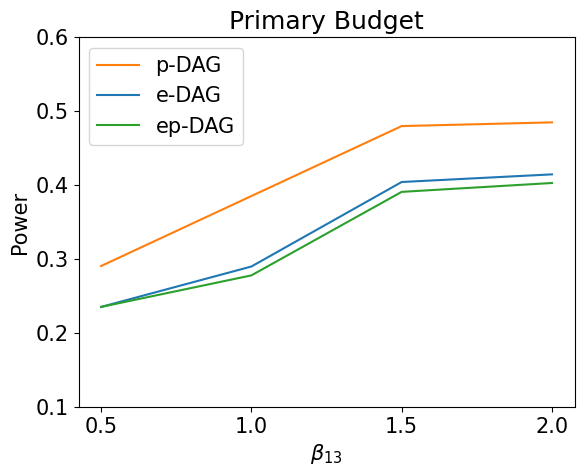}\label{PB_power}}
  \hfill
 {\includegraphics[width=0.48\textwidth]{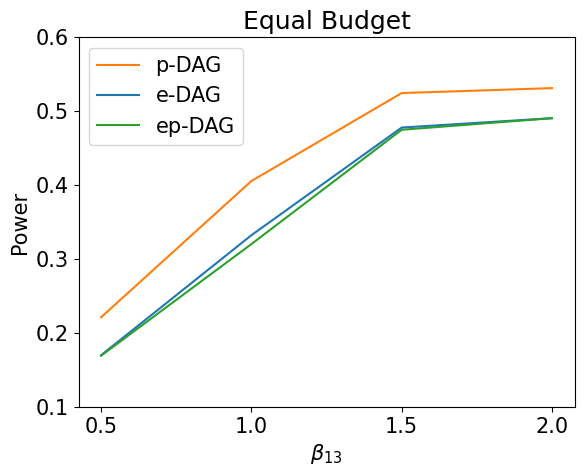}\label{EB_power}
  \caption{Power from $m=1000$ iterations of the p-DAG, e-DAG, and ep-DAG procedures, after a fixed 50 steps. The only change between the simulations is the alternative hypothesis, which is why the functions level out. For primary budget, the p-DAG FWER is 0.006, and for equal budget, the p-DAG FWER is 0.006. For both e-DAG and ep-DAG the FWER is 0.002 for primary budget and 0.001 for equal budget.}\label{fig:DAG_power}}
\end{figure}

\section{Semi-Synthetic Simulations with New Yorker Caption Contest Data}\label{sec:new-yorker}

\subsection{New Yorker Caption Contest Data}

To illustrate the utility of e-closed testing in a real data setting where the sequential data has a natural dependence, we apply both e-Holm and e-DAG to testing problems related to the New Yorker Caption Contest data collected by \cite{zhang2024humor}. The data consist of aggregated vote totals and corresponding average ratings assigned to various captions submitted for each cartoon. The data was collected from sequential user votes, where the relevant caption was determined by a bandit sampling algorithm. This dependent sequential setting provides a popular dataset for sequential testing, and has been used as an example by such papers as \cite{yang2017}.

Since the dataset is aggregated for each caption, following previous work, we design a semi-synthetic simulation to mimic the multi-armed bandits setting, with the empirical averages treated as the ground truth. Specifically, we consider contest \#895, which received 4958 captions with 777820 total votes cast. Users rated captions on a scale of $\{1,2,3\}$ and the average rating for the top-rated caption was 1.6323, and 64 captions had a mean above 1.5. We have access to both the average rating $\mu_i$ for each caption, as well as a vector of rating probabilities $(p_i(1),p_i(2),p_i(3))$.

Using a lilUCB sampling algorithm \citep{jamieson2014lil} detailed in Appendix \ref{sec:simulation-appendix}, which selects a random arm with probability $\epsilon=0.4$ in our simulation, and selects an arm with the highest upper confidence bound (UCB) otherwise, we simulate votes   for a subset of captions. Using this framework, we present two semi-synthetic simulations. First, we test the null hypothesis $H_{0,i}:\mu_i\leq1.5$, i.e. whether caption $i$ has a mean below 1.5 and use Holm's procedure to control the FWER. Second, we define three semantic features describing the content of each caption, and test the main effects and interaction effects for these 6 average contrasts, controlling the FWER with e-DAG as in Section \ref{sec:edag-sim}.

\subsection{e-Holm for Caption-testing simulation} We compute an e-process $e_{it}$ for each caption $i$ whose multiplier is 1 if it is not chosen at pull $s$ and a betting score otherwise, where $k_s$ is the arm that is chosen at time $s$:
\begin{align*}
e_{it}=\prod_{s=1}^t\Big[\mathbf{1}\{k_s\neq i\}+\mathbf{1}\{k_s=i\}\exp(\eta(Y_s-\mu_0)-\psi)\Big]
\end{align*}
where $Y_s$ is the value of the arm at pull $s$, $\eta=0.25$ is a tuning parameter and $\mu_0$ is the null against which we are testing. The parameter $\psi$ is a normalization parameter that ensures the betting score has conditional expectation $\leq1$ and that the process is an e-process for each $i$.

We apply Holm's procedure to $n_1=20$ non-nulls and $n_0=30$ nulls, and track the time to first rejection, in the same manner as in Section \ref{sec:eholm-sim}. We compute the average time until the first Holm-adjusted rejection for e-Holm and p-Holm, in terms of the overall number of pulls $T_0$ and the number of pulls of the best arm $T_{0,i^*}$ where $i^*$ is the best arm. We know that e-Holm rejects at least as fast, and in this case the average ratio of $T_0$ is 0.988, and the average ratio of $T_{0,i^*}$ is 0.981. In terms of the probability of giving an improvement, e-Holm improves on p-Holm with probability 0.79 for $T_0$ and with probability 0.78 for $T_{0,i^*}$.

\subsection{e-DAG for semantic feature-testing simulation} We assign each caption to one of eight groups according to its factorial assignment in $\{0,1\}^3$. With the same contrasts as in Section \ref{sec:edag-sim}, we can define a true effect as the corresponding contrast between the group-level average true ratings. For example, we define $\beta_{13}$ for this population to be
\begin{align*}
    \beta_{13}=\frac{1}{n_{101}}\sum_{a_i=101}\mu_i-\frac{1}{n_{001}}\sum_{a_i=001}\mu_i-\frac{1}{n_{100}}\sum_{a_i=100}\mu_i+\frac{1}{n_{000}}\sum_{a_i=000}\mu_i
\end{align*}
where $n_a$ is the number of captions with a given group, $a_i$ is the group for caption $i$, and $\mu_i$ is the average rating for caption $i$. In the case where the design is balanced and $n_a$ is constant over $a$, this is exactly the result of an ANOVA linear model for the caption data. We model this framework by randomly sampling $m=6$ captions from each group to get a total of $n=8m=48$ captions. 

We run the same bandit sampler as in the Holm simulation, with $T=1000$ pulls, and test the main and interaction effects via e-processes. To do this we employ an augmented inverse propensity weighting (AIPW) estimator, which we detail further in Section \ref{sec:simulation-appendix}. The stopping times for the main effects will be identical, so we track when the first interaction effect is rejected after adjustment, given by $T_{int}$.

We find that the average ratio of $T_{int}$ between e-DAG and p-DAG is 0.897, indicating that e-DAG rejects faster, on average. Similarly, the probability that $T_{int}$ improves the stopping time is 0.93.

\subsection{E-Holm for fixed-sample-size simulation}\label{sec:fixed-sample} Lastly, we present a semi-synthetic simulation that is separate from the bandit algorithm structure, to demonstrate the difference in these methods for a fixed sample setting with traditional p-values. We use the same formulation as our simulation for Holm, selecting $n_1=20$ alternative hypotheses and $n_0=30$ null hypotheses. The $n_1$ alternatives are the largest means, and the $n_0$ nulls are the largest means with $\mu\leq1.5$, the closest nulls to the boundary. This design means that there is a sizable gap between the alternative means and the boundary.

For each of $n=50$ captions, we draw $m$ observations, and compute a p-value from a one-sided one sample t-test. We vary $m$ to compare FWER and power across sample sizes. Typically, a fixed sample size e-value is much less powerful than the corresponding optimal p-value for each individual hypothesis so the power loss will be amplified for multiple testing. Nevertheless, Theorem \ref{th:fwer-pareto-bound} allows us to use e-Holm on the inverse p-values that are pseudo e-values, which we call em-Holm (to align with em-Holm from the preceding subsections), giving approximate FWER control. Because we now work with a fixed sample outside of the bandit contextstock, we have independence between e-values and can apply em-Holm. We can also compute $\alpha'=\sup\{\alpha^*\in(0,1):B_n(\alpha^*)\leq\alpha\}$ and apply e-Holm to pseudo e-values at level $\alpha'$ to guarantee FWER control. For our simulation, $\alpha=0.05$ and $\alpha'=0.038$. In our semi-synthetic simulation, we compare these three methods: p-Holm, em-Holm, and em($\alpha'$)-Holm.

\begin{figure}
    \centering
    {\includegraphics[width=0.48\textwidth]{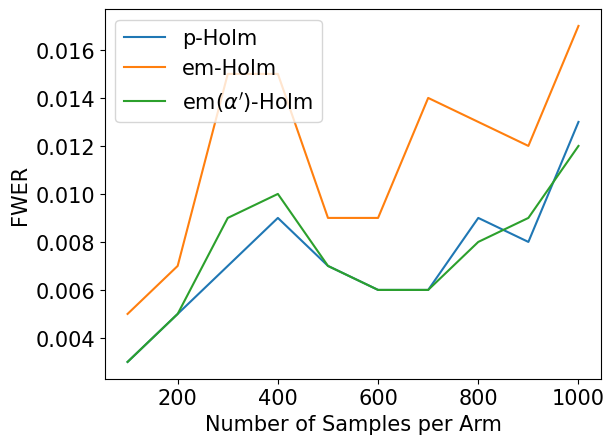}}
  \hfill
 {\includegraphics[width=0.48\textwidth]{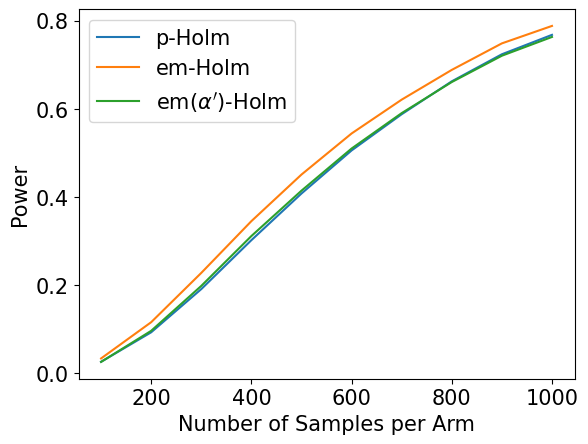}}
    \caption{FWER and power for testing $n=50$ caption means with $n_1=20$ alternatives at level $\alpha=0.05$. We vary the number of observations per caption. The implied FWER for em-Holm is $B_{30}(0.05)=0.07$ but it still controls FWER empirically.}
    \label{fig:fixed-sample-sims}
\end{figure}

Figure \ref{fig:fixed-sample-sims} shows that in this case, the pseudo e-value approach is more powerful than p-Holm while maintaining FWER control. Even when applying the adjusted level $\alpha'=B_{50}^{-1}(\alpha)=.038$, the e-Holm procedure on pseudo e-values is similarly powerful to p-Holm, while guaranteeing FWER control. This adjusted level accounts for the possibility that all hypotheses are nulls, so it is conservative since there are actually 30 nulls.

\section{Conclusion and discussion}
In this paper, we formalized the framework of e-value-based closed testing, in which each local test in the closure principle for FWER control is based on an e-value. We showed that e-value-based closed testing provides the ability to select the level of control $\alpha$ post-hoc, both in the traditional and anytime-valid settings. We also show  e-value-based closed testing applied to any global e-processes provides always-valid FWER control for fixed $\alpha$ and that applied to the running maxima and calibrated with an adjuster achieves both post-hoc and always validity. Furthermore, when $\alpha\rightarrow 0$, the closure of weighted e-Bonferroni local tests applied to pseudo-e-values as inverse running maxima of e-processes provides strong FWER control. 

A large class of closed tests are based on weighted p-Bonferroni based on a graph structure. We extend this graphical approach to weighted e-Bonferroni by developing efficient algorithms to compute the rejection set for adjusted e-values. For the procedure based on unweighted Bonferroni, we build on the e-Holm algorithm of \cite{vovk2021values,vovk2023confidence} to get a simple rejection threshold, and quantify the power increase compared to p-Holm with inverse-e p-values. For the graphical approach on directed acyclic graphs, we develop an efficient algorithm to compute the e-DAG closed test, and refine the algorithm to an $O(n)$ algorithm for the e-Fallback procedure which is based on a chain graph.

\subsection{Improved thresholds for local tests}
In earlier sections, we focused on the standard e-value test that uses a threshold of $1/\alpha$ for each local hypothesis. Recent studies have demonstrated that this threshold can be uniformly improved. Specifically, \cite{ramdas2026randomized} and \cite{koning2024continuous} show that $1/\alpha$ can be replaced by $U / \alpha$ where $U$ is uniformly distributed on $[0, 1]$ and its randomness is exogenous. For closed testing, we can define the local test as $e_{It}\ge U_{I} / \alpha$ for a collection of uniform random variables $\{U_I: I \subset [n]\}$ that are independent of the e-values. By Lemma \ref{lem:key}, the FWER is bounded by 
\[\mathbb{P}\left(e_{\mathcal{H}_0}\ge \frac{U_{\mathcal{H}_0}}{\alpha}\right) \le \BE[\min(1,\alpha e_{\mathcal{H}_0})]\le \BE[\alpha e_{\mathcal{H}_0}] \le \alpha.\]
As a result, the randomized e-value-based closed testing controls FWER for any arbitrary dependence structure among the uniforms. 
In the sequential setting, if we use a time-invariant uniform random variable for each local test, we can similarly prove the anytime-valid FWER control. However, it remains unclear whether other properties discussed in Section \ref{sec:e-value-closed-testing}, such as the post-hoc or always-valid FWER control, continue to hold. 

With additional distributional assumptions on the e-values, \cite{blier2024improved} show that the threshold could be improved in a deterministic fashion. For example, if the e-value has a decreasing density, its $(1-\alpha)$-th quantile is bounded by $1/(2\alpha)$ (Theorem 2 (i) of \cite{blier2024improved}). Applying this result to our setting, if $e_1, \ldots, e_n$ for elementary hypotheses are independent with decreasing densities, any weighted average of them still has a decreasing density. As a consequence, the closure of any collection of weighted e-Bonferroni test at level $2\alpha$ controls FWER at $\alpha$. 

In the case where test statistics have a well-defined correlation structure under the null, \cite{bretz2011graphical} describe a procedure that adjusts the threshold by a constant $c_I$ for each subset $I$ such that the local test is still valid. In principle, the same technique could be employed for e-values with a known covariance structure, changing the rejection threshold for $H_I$ to $1/(c_I\alpha)$ for $c_I\geq1$, even if $c_Ie_I$ is not a valid e-value, as long as the resulting local test for $H_I$ is valid. We leave the derivation of such constants for future work.

\subsection{FWER control under logical constraints}
In previous sections, we focus on the case with logically independent hypotheses -- that is, any combination of true and false nulls is considered possible. However, in many real-world applications, the hypotheses are logically related, meaning that some combinations of true/false status are ruled out a priori \citep{shaffer1986modified, goeman2010sequential,loper2022smoothed}. Logical constraints \citep{shaffer1986modified} are deterministic rules that restrict the allowable configurations of null and alternative hypotheses. For example, under a multi-factor model like \eqref{eq:three_factor}, the hereditary principle is often invoked as a structural prior that interactions are unlikely to be present unless supported by main effects. More formally, the strong hereditary principle states that $H_{ij}$ must be null if $H_i$ or $H_j$ is null, and the weak hereditary principle states that $H_{ij}$ must be null if $H_i$ and $H_j$ are nulls. 

Under the logical constraints, the intersection hypothesis $H_I$ is equivalent to $H_{J(I)}$ where $J(I)\subset I$ is the minimal subset of $I$ such that no hypothesis in $J(I)$ is implied by any hypothesis in $I$. For example, under the strong hereditary principle, $J(I) = \{\{1\}, \{2\}\}$ when $I = \{\{1\}, \{2\}, \{1,2\}, \{1,3\}, \{1,2,3\}\}$. The logical constraints hence lead to a reduction of the number of hypotheses -- the adjusted e-value can be defined as $e_i^* = \min_{I\ni i}e_{J(I)}$, which is at least as large as the adjusted e-value in the absence of logical constraints. All theoretical properties discussed in Section \ref{sec:e-value-closed-testing} remain valid for this procedure. Nonetheless, the dynamic programming-based algorithm for e-graphical approaches needs modification. We leave this for future research. 

\subsection{Post-hoc false discovery proportion inference with e-values}
The influential work by \cite{goeman2011multiple} proposes a generic approach to construct a simultaneous confidence bound on the false discovery proportion (FDP) based on closed testing. Specifically, for each given subset $\mathcal{R}\subset [n]$, let 
\[\ell_{1-\alpha}(\mathcal{R}) = \min\{|\mathcal{R}\setminus J|: \phi_I = 0\,\, \forall I\supset J\}.\]
Then $\ell_{1-\alpha}(\mathcal{R})$ is a uniform lower confidence bound on the number of true discoveries in $\mathcal{R}$:
\[\mathbb{P}\left( |\mathcal{R}\setminus \mathcal{H}_0|\ge \ell_{1-\alpha}(\mathcal{R}), \,\, \forall \mathcal{R}\subset [n]\right)\ge 1- \alpha.\]
For p-value-based closed testing, the consonance property leads to fast algorithms \citep{goeman2011multiple}. \cite{vovk2023confidence, vovk2024true} develop an efficient algorithm for the e-Holm procedure, leveraging the symmetry of the unweighted e-Bonferroni procedure. For general e-graphical approaches, we can compute $\ell_{1-\alpha}(\mathcal{R})$ by brute force when $n$ is small. For large $n$, the computation becomes more involved due to the asymmetry and we leave this for future research. 

\bibliography{reference}
\bibliographystyle{plainnat}

\begin{appendices}

\section{Proofs}\label{sec:proof-appendix}
\subsection{Proofs of results in Section \ref{sec:e-value-closed-testing}}

\begin{proof}[\textbf{Proof of Theorem \ref{th:post-hoc-fwer}}]

We can upper bound this quantity by replacing $\hat\alpha$ with $\alpha$ and taking the supremum, and moreover, by Lemma \ref{lem:key}, we can write out \begin{align*}
    \BE\left[\frac{\mathbf{1}\{\max_{i\in\mathcal{H}_0}e_i^*\geq\frac{1}{\hat\alpha}\}}{\hat\alpha}\right]\leq\BE\left[\sup_\alpha\frac{\mathbf{1}\{\max_{i\in\mathcal{H}_0} e_i^*\geq\frac{1}{\alpha}\}}{\alpha}\right]\leq\BE\left[\sup_\alpha\frac{\mathbf{1}\{e_{\mathcal{H}_0}\geq\frac{1}{\alpha}\}}{\alpha}\right].
\end{align*}

By \cite{grunwald2024beyond}, this right-hand quantity is bounded by 1 since $e_{\mathcal{H}_0}$ is a valid e-value and thus post-hoc valid with scaled error $\leq1$ in expectation. However, we can reconstruct a simple proof of this fact by seeing that $\mathbf{1}\{e_{\mathcal{H}_0}\geq\frac{1}{\alpha}\}/\alpha$ is maximized when the numerator is 1 and $\alpha$ is as large as possible, which happens when $\alpha=1/e_{\mathcal{H}_0}$ as long as $e_{\mathcal{H}_0}\geq1$. Otherwise, we can never reject. This computes the upper bound as \begin{align*}
    \BE\left[\sup_\alpha\frac{\mathbf{1}\{e_{\mathcal{H}_0}\geq\frac{1}{\alpha}\}}{\alpha}\right]=\BE\left[\mathbf{1}\{e_{\mathcal{H}_0}\geq1\}e_{\mathcal{H}_0}\right]\leq\BE[e_{\mathcal{H}_0}]\leq1.
\end{align*}
    
\end{proof}

\begin{proof}[\textbf{Proof of Theorem \ref{th:post-hoc-anytime-valid}}]

The proof follows from \eqref{eq:max-nulls} and the properties of $e_{\mathcal{H}_0,\tau}$ as an e-value, so we may bound

\begin{align*}
    \BE\left[\frac{\mathbf{1}\{\max_{i\in\mathcal{H}_0} e_{i\tau}^*\geq\frac{1}{\hat\alpha}\}}{\hat\alpha}\right]&\leq\BE\left[\frac{\mathbf{1}\{e_{\mathcal{H}_0,\tau}\geq\frac{1}{\hat\alpha}\}}{\hat\alpha}\right]\leq\BE\left[\sup_\alpha\frac{\mathbf{1}\{e_{\mathcal{H}_0,\tau}\geq\frac{1}{\alpha}\}}{\alpha}\right]\\&=\BE\left[\mathbf{1}\{e_{\mathcal{H}_0,\tau}\geq1\}e_{\mathcal{H}_0,\tau}\right]\leq\BE\left[e_{\mathcal{H}_0,\tau}\right]\leq1
\end{align*}
    
\end{proof}

\begin{proof}[\textbf{Proof of Theorem \ref{th:always-valid}}]

We first swap the supremum over $t$ and maximum over $i$, then appeal to the stochastic dominance \eqref{eq:max-nulls}. \begin{align*}
    \BP\left(\max_{i\in\mathcal{H}_0}\sup_te_{it}^*\geq\frac{1}{\alpha}\right)=\BP\left(\sup_t\max_{i\in\mathcal{H}_0}e_{it}^*\geq\frac{1}{\alpha}\right)\leq\BP\left(\sup_te_{\mathcal{H}_0,t}\geq\frac{1}{\alpha}\right)\leq\alpha,
\end{align*}
where the last inequality follows from Ville's inequality \eqref{eq:villes}.
    
\end{proof}

\subsection{Proofs of results in Section \ref{sec:weighted-closed-testing}}

\begin{Lemma}\label{lem:pareto-bound}
    Let $K\geq1$ and $\alpha\in(0,1)$. Suppose $(e_{it})_{t\geq0}$ are independent e-processes and define $\tilde e_i=\max_{t\geq0}e_{it}$. Let $Y_1,...,Y_K$ be independent $\textrm{Pareto}(1)$ random variables. Then
    \begin{align*}        \sup_{w\in\Delta_K}\mathbb{P}\left(\sum_{i=1}^Kw_i\tilde e_i\geq\frac{1}{\alpha}\right)\leq B_K(\alpha)
    \end{align*}
where $\Delta_K=\{w=(w_1,...,w_K):w_i\geq0,\sum_{i=1}^Kw_i=1\}$ is the $K$-dimensional simplex and  
    \begin{align*}
B_K(\alpha):=\mathbb{P}\left(\sum_{i=1}^KY_i\geq\frac{K}{\alpha}\right)=\int_{\substack{y_1,...,y_K\geq1 \\ y_1+\cdots+y_K\geq K/\alpha}}\prod_{i=1}^K\frac{dy_i}{y_i^2}.
    \end{align*}
    The maximizing weights are uniform $w_i=1/K$ and the bound is sharp.
\end{Lemma}

\begin{proof}

Ville's inequality implies $\tilde e_i\preceq Y_i$, where $\preceq$ denotes stochastic domination. Since the $\tilde e_i$'s are independent, we may couple independent copies such that $\tilde e_i\leq Y_i$ a.s. for each $i$. The event
\begin{align*}
    \left\{x\in[0,\infty)^K:\sum_{i=1}^Kw_ix_i\geq\frac{1}{\alpha}\right\}
\end{align*}

is increasing in each coordinate $x_i$, therefore

\begin{align*}
    \mathbb{P}\left(\sum_{i=1}^Kw_i\tilde e_i\geq\frac{1}{\alpha}\right)\leq\mathbb{P}\left(\sum_{i=1}^Kw_iY_i\geq\frac{1}{\alpha}\right).
\end{align*}

It remains to maximize the Pareto bound over \(w\). Let \(Y,Z\) be independent \(\mathrm{Pareto}(1)\) variables. For \(a,b\ge 0\), write \(m=a+b\). For \(x\le m\),
\[
\BP(aY+bZ\ge x)=1,
\]
because \(Y,Z\ge 1\). For \(x>m\), a direct integration gives
\[
\BP(aY+bZ\ge x)
=
\frac{m}{x}
+
\frac{ab}{x^2}
\log\left(1+\frac{x(x-m)}{ab}\right),
\]
with the second term interpreted as \(0\) when \(ab=0\). For fixed \(m\) and \(x>m\), the only dependence on \((a,b)\) is through \(p=ab\). Since
\[
\frac{d}{dp}\left\{p\log\left(1+\frac{C}{p}\right)\right\}
=
\log\left(1+\frac{C}{p}\right)-\frac{C}{p+C}>0,
\qquad C:=x(x-m)>0,
\]
the probability above is increasing in \(ab\). For fixed \(a+b=m\), the product \(ab\) is maximized at \(a=b=m/2\). Thus replacing any two weights \((a,b)\) by their average \((m/2,m/2)\) cannot decrease the Pareto tail probability.

Now condition on all Pareto variables except two coordinates. The preceding two-variable argument applies with the conditional threshold obtained after subtracting the contribution of the other coordinates. Hence any pairwise averaging of two unequal weights cannot decrease
\[
\BP\left(\sum_{i=1}^K w_iY_i\ge \frac{1}{\alpha}\right).
\]
Repeated pairwise averaging, equivalently majorization, shows that the maximum over \(\Delta_K\) is attained at the uniform vector. This proves
\[
\sup_{w\in\Delta_K}
\BP\left(\sum_{i=1}^K w_iY_i\ge \frac{1}{\alpha}\right)
=
\BP\left(\frac{1}{K}\sum_{i=1}^K Y_i\ge \frac{1}{\alpha}\right).
\]

Finally, sharpness follows from the independent e-processes
\[
e_{i,t}=\exp(B_{i,t}-t/2),\qquad t\ge 0,
\]
where \(B_1,\ldots,B_K\) are independent standard Brownian motions. For each \(i\),
\[
\BP\left(\sup_{t\ge 0} e_{i,t}\ge y\right)=\frac{1}{y},
\qquad y\ge 1,
\]
so the running maxima are independent \(\mathrm{Pareto}(1)\) variables. Taking uniform weights attains \(B_K(\alpha)\).

\end{proof}

\begin{Lemma}\label{lem:pareto-recursion}
    Let \(T_k(s):=\BP(Y_1+\cdots+Y_k\ge s)\), where \(Y_1,\ldots,Y_k\) are independent \(\mathrm{Pareto}(1)\) variables. Then
\[
B_K(\alpha)=T_K(K/\alpha).
\]
The functions \(T_k\) are given exactly by the recursion
\[
T_1(s)=\frac{1}{s},\qquad s\ge 1,
\]
and, for \(k\ge 2\) and \(s\ge k\),
\[
\boxed{
T_k(s)
=
\frac{1}{s-k+1}
+
\int_1^{s-k+1}\frac{1}{y^2}\,T_{k-1}(s-y)\,dy .
}
\]
Thus the bound
\[
\sup_{w\in\Delta_K}
\BP\left(\sum_{i=1}^K w_i\tilde e_i\ge \frac{1}{\alpha}\right)
\le B_K(\alpha)
\]
is fully nonasymptotic for every fixed \(K\) and \(\alpha\in(0,1)\).
\end{Lemma}

\begin{proof}
    The identity \(B_K(\alpha)=T_K(K/\alpha)\) is the definition of \(B_K\). For the recursion, condition on \(Y_1=y\). Since \(Y_2+\cdots+Y_k\ge k-1\) almost surely, the event \(Y_1+\cdots+Y_k\ge s\) is certain whenever \(y\ge s-k+1\). Hence, for \(s\ge k\),
\[
T_k(s)
=
\int_1^{s-k+1}\frac{1}{y^2}T_{k-1}(s-y)\,dy
+
\int_{s-k+1}^{\infty}\frac{dy}{y^2}.
\]
The second integral equals \(1/(s-k+1)\), giving the stated recursion.
\end{proof}

\begin{proof}[\textbf{Proof of Theorem \ref{th:fwer-pareto-bound}}]

Let \(T_k(s)=\BP(Y_1+\cdots+Y_k\ge s)\). We prove by induction that, for each fixed \(k\ge 1\),
\[
T_k(s)
=
\frac{k}{s}
+
\frac{k(k-1)\log s}{s^2}
+O_k\left(\frac{1}{s^2}\right),
\qquad s\to\infty .
\]
The case \(k=1\) is exact. Assume the result holds for \(k-1\). From the nonasymptotic recursion,
\[
T_k(s)
=
\frac{1}{s-k+1}
+
\int_1^{s-k+1}\frac{1}{y^2}T_{k-1}(s-y)\,dy .
\]
Split the integral at \(s/2\). On \(1\le y\le s/2\), the induction hypothesis gives, uniformly in \(y\),
\[
T_{k-1}(s-y)
=
\frac{k-1}{s-y}
+
\frac{(k-1)(k-2)\log(s-y)}{(s-y)^2}
+O_k\left(\frac{1}{(s-y)^2}\right).
\]
The elementary estimates
\[
\int_1^{s/2}\frac{dy}{y^2(s-y)}
=
\frac{1}{s}+\frac{\log s}{s^2}+O\left(\frac{1}{s^2}\right)
\]
and
\[
\int_1^{s/2}\frac{\log(s-y)}{y^2(s-y)^2}\,dy
=
\frac{\log s}{s^2}+O\left(\frac{1}{s^2}\right)
\]
therefore imply
\[
\int_1^{s/2}\frac{1}{y^2}T_{k-1}(s-y)\,dy
=
\frac{k-1}{s}
+
\frac{(k-1)^2\log s}{s^2}
+O_k\left(\frac{1}{s^2}\right).
\]

For the remaining part, put \(u=s-y\). Since \((s-u)^{-2}=s^{-2}+O(u/s^3)\) for \(u\le s/2\), and since \(T_{k-1}(u)\le C_k/u\) for \(u\ge k-1\),
\[
\int_{s/2}^{s-k+1}\frac{1}{y^2}T_{k-1}(s-y)\,dy
=
\frac{1}{s^2}\int_{k-1}^{s/2}T_{k-1}(u)\,du
+O_k\left(\frac{1}{s^2}\right).
\]
The induction hypothesis also gives
\[
\int_{k-1}^{s/2}T_{k-1}(u)\,du
=(k-1)\log s+O_k(1).
\]
Hence
\[
\int_{s/2}^{s-k+1}\frac{1}{y^2}T_{k-1}(s-y)\,dy
=
\frac{(k-1)\log s}{s^2}
+O_k\left(\frac{1}{s^2}\right).
\]
Finally,
\[
\frac{1}{s-k+1}=\frac{1}{s}+O_k\left(\frac{1}{s^2}\right).
\]
Combining the three displays yields
\[
T_k(s)
=
\frac{k}{s}
+
\frac{\{(k-1)^2+(k-1)\}\log s}{s^2}
+O_k\left(\frac{1}{s^2}\right),
\]
which is the desired formula because \((k-1)^2+(k-1)=k(k-1)\).

Taking \(s=K/\alpha\) gives
\[
B_K(\alpha)
=
\alpha+
\frac{K-1}{K}\alpha^2\log\frac{K}{\alpha}
+O_K(\alpha^2)
=
\alpha+
\frac{K-1}{K}\alpha^2\log\frac{1}{\alpha}
+O_K(\alpha^2),
\]
because \(K\) is fixed.
    
\end{proof}

\subsection{Proofs of results in Section \ref{sec:e-holm}}

\begin{proof}[\textbf{Proof of Theorem \ref{th:eholm_rejection}}]
We can rewrite the rejection rule \eqref{eq:ebonf} as
\begin{equation}\label{eq:eholm_step1}
e_i \ge  \frac{1}{\alpha}+\max_{I\ni i}\sum_{j\in I\setminus\{i\}}\left(\frac{1}{\alpha}-e_{j}\right).
\end{equation}
Taking $I = \{i\}$ implies that $H_i$ would not be rejected when $e_i < 1/\alpha$. Thus we assume $e_i \ge 1/\alpha$ throughout the rest of the proof. It is clear that the RHS of \eqref{eq:eholm_step1} is bounded by $1/\alpha + C$ from above. On the other hand, it equals $1/\alpha + C$ if $I = \{i\}\cup \{j: e_j < 1/\alpha\}$. The proof is then completed. 
\end{proof}

\subsection{Proofs of results in Section \ref{sec:efall}}

\begin{proof}[\textbf{Proof of Lemma \ref{lem:remove_larger_than_i}}]
For any $I\ni i$, $J=I\cap[i]$ has $\alpha e_I\ge \alpha e_J$, since the $\alpha$-budget for $j>i$ gets assigned nowhere. Specifically, taking $i=i_{\ell'}$ without loss of generality, 
\begin{equation*}
\alpha e_I-\alpha e_J=\sum_{\ell=\ell'+1}^k\left(\sum_{i=i_{\ell-1}+1}^{i_\ell}\alpha_i\right)e_{i_\ell}\geq0.
\end{equation*}
\end{proof}

\begin{proof}[\textbf{Proof of Theorem \ref{th:efall-pf}}]
By Lemma \ref{lem:remove_larger_than_i}, we only need to show that $m_i=\min_{i\in I\subset[i]}\alpha e_I$. For each $i=i_k$, we consider the various options for $i_{k-1}$, which can be any element of $[i-1]$ or $k=1$ so $I=\{i\}$. Thus we can rewrite $$\alpha e_i^{*} = \min_{i\in I\subset[i]}\alpha e_I=\min_{0\leq j<i}\left[\min_{j\in J\subset[j]}\alpha e_J+e_i\sum_{k=j+1}^i\alpha_k\right],$$ where the minimum for $j=0$ is 0 by convention. Then we use induction, with our base case of $m_1=\alpha_1e_1$, where we assume that $\min_{j\in J\subset[j]}\alpha e_J=m_j$ for all $j\leq i-1$, in which case we get that $\min_{i\in I\subset[i]}\alpha e_I=m_i$, as intended.

\end{proof}

\begin{proof}[\textbf{Proof of Theorem \ref{th:efallback_optimal_I}}]
We prove the result by induction. The case for $i = 1$ is evident by Lemma \ref{lem:remove_larger_than_i}. Suppose the result holds for $1, \ldots, i-1$. If $j(i) = 0$, $e_i > e_j$ for all $j < i$ and $I_i^* = \{i\}$. Then, for any $i\in I \subset [i]$,
\[\alpha e_{I}\ge \left(\sum_{j=1}^{i}\alpha_j\right) e_i = \alpha e_{I_i^*}.\]
For the rest of the proof, we assume $j(i) > 0$. 

Let $I_i = \{i_1, \ldots, i_k\}$, with $i_1 < i_2 < \ldots < i_k = i$, be any subset such that $e_{i}^{*} = e_{I_i}$. We first prove that there exists $I_i'$ such that $e_i^* = e_{I_i'}$ and $j(i)$ is the second largest element. Assume that $j(i)\in (i_{\ell - 1}, i_{\ell}]$ for some $\ell \le k$, let $I_i' = \{i_1, \ldots, i_{\ell - 1}, j(i), i\}$. Then 
\begin{align*}
&\alpha e_{I_i} - \alpha e_{I_i'} = \sum_{r=\ell}^k\left(\sum_{j=i_{r-1}+1}^{i_r}\alpha_j\right)e_{i_{r}} - \left(\sum_{j=i_{\ell-1}+1}^{j(i)}\alpha_j\right)e_{j(i)} - \left(\sum_{j=j(i)+1}^{i}\alpha_j\right)e_{i}\\
& = \left(\sum_{j=i_{\ell-1}+1}^{j(i)}\alpha_j\right)(e_{i_{\ell}} - e_{j(i)}) + \left(\sum_{j=j(i)+1}^{i_\ell}\alpha_j\right)(e_{i_\ell} - e_i) + \sum_{r=\ell+1}^k\left(\sum_{j=i_{r-1}+1}^{i_r}\alpha_j\right)(e_{i_{r}} - e_i).
\end{align*}
By definition of $j(i)$, $e_{i_r} > e_i \ge e_{j(i)}$ for any $r\ge \ell$. Thus, $e_{I_i}\ge e_{I_i'}$. Since $I_i$ is a minimizer, $I_i'$ must be so as well. 

Since we can assume $j(i)$ is the second largest element in $I_i$, $e_i^*$ can be written as
\[e_i^* = \frac{1}{\alpha}\left(\sum_{j=j(i)+1}^i \alpha_j\right) e_i + \min_{j(i)\in J\subset [j(i)]} e_J.\]
By the induction hypothesis, the second term is $e_{I_{j(i)}^*}$. Thus, 
\[e_{i}^{*} = \frac{1}{\alpha}\left(\sum_{j=j(i)+1}^i \alpha_j\right) e_i + e_{j(i)}^*.\]
By construction, $e_i^* = e_{I_i^*}$. 
\end{proof}

\begin{proof}[\textbf{Proof of Lemma \ref{lem:iterative_Ii*}}]
If $j(i) = 0$, the result follows because $e_i > e_j$ for all $j < i$, $I_{i}^{*} = \{i\}$, and $I_{i-1}^{*}\setminus \{j\in I_{i-1}^{*}: e_j > e_i\} = \emptyset$. If $j(i) > 0$, we are left to show that
\begin{equation}\label{eq:efallback_goal}
I_{j(i)}^* =  \{k_1(i-1), \ldots, k_{m'_i}(i-1)\}, \quad \text{where }m'_i = \max\{m: e_{k_{m}(i-1)} \le e_{i}\}.
\end{equation}
By definition of $j(i)$, for any $j(i) < k < i$, $e_{k} > e_{i} \ge e_{j(i)}$. Thus, for each $j\le j(i)$,
\[\min\{e_{j}, \ldots, e_{i-1}\} = \min\{e_{j}, \ldots, e_{j(i)}\}.\]
 \eqref{eq:efallback_goal} then follows from the representation \eqref{eq:backward_cummin}.
\end{proof}

\subsection{Proofs of results in Section \ref{sec:dag}}

\begin{proof}[\textbf{Proof of Lemma \ref{lem:e_j^I}}]
    The proof is a straightforward manipulation of the expansion of $e_I$ using the definition of $w_i(I)$, regrouping by the $\alpha_j$ terms. 

    \begin{align*}
        e_I&=\frac{1}{\alpha}\sum_{i\in I}\left(\alpha_i+\sum_{j\notin I}\alpha_j\sum_{p\in\mathcal{P}_{j,i}^{(I)}}\prod_{\ell=1}^{k(p)}q_{i_{\ell-1},i_\ell}\right)e_i\\
        &=\frac{1}{\alpha}\sum_{i\in I}\alpha_ie_i+\frac{1}{\alpha}\sum_{i\in I}\left(\sum_{j\notin I}\alpha_j\sum_{p\in\mathcal{P}_{j,i}^{(I)}}\prod_{\ell=1}^{k(p)}q_{i_{\ell-1},i_\ell}\right)e_i\\
        &=\frac{1}{\alpha}\sum_{i\in I}\alpha_ie_i^{(I)}+\frac{1}{\alpha}\sum_{j\notin I}\alpha_j\left(\sum_{i\in I}e_i\sum_{p\in\mathcal{P}_{j,i}^{(I)}}\prod_{\ell=1}^{k(p)}q_{i_{\ell-1},i_\ell}\right)\\
        &=\frac{1}{\alpha}\sum_{i\in I}\alpha_ie_i^{(I)}+\frac{1}{\alpha}\sum_{j\notin I}\alpha_je_j^{(I)}=\frac{1}{\alpha}\sum_{j=1}^n\alpha_je_j^{(I)}.
    \end{align*}
\end{proof}

\begin{proof}[\textbf{Proof of Lemma \ref{lem:e_j^I-rec}}]
    To prove this, we divide the set of paths $\mathcal{P}_{j,i}^{(I)}$ into those with a given first node $i_1=k$ which is a child of $j$. These will all have the same first term $q_{j,k}$. Then we know that the resulting set of paths is the same as the set $\mathcal{P}_{k,i}^{(I)}$, but with an additional index $j$ at the start. In the case where the first node is $i_1=i$, this trivially reduces to $q_{j,i}e_i=q_{j,i}e_i^{(I)}$. Then we can rewrite the definition of $e_j^{(I)}$ for $j\notin I$ as 

    \begin{align*}
        e_j^{(I)}&=\sum_{i\in I}e_i\left(\sum_{p\in\mathcal{P}_{j,i}^{(I)}}\prod_{\ell=1}^{k(p)}q_{i_{\ell-1},i_\ell}\right)\\
        &=\sum_{(j,k)\in E}q_{j,k}\sum_{i\in I}e_i\left(\sum_{p\in\mathcal{P}_{j,i}^{(I)},i_1=k}\prod_{\ell=2}^{k(p)}q_{i_{\ell-1},i_\ell}\right)\\
        &=\sum_{(j,k)\in E}q_{j,k}\sum_{i\in I}e_i\left(\sum_{p\in\mathcal{P}_{k,i}^{(I)}}\prod_{\ell=1}^{k(p)}q_{i_{\ell-1},i_\ell}\right)=\sum_{(j,k)\in E}q_{j,k}e_k^{(I)}.
    \end{align*}
\end{proof}

\begin{proof}[\textbf{Proof of Theorem \ref{th:dag-alg}}]
    We first prove the first part by induction backward from $j = n$ to $j = 1$, for any $i\in[n]$, fixing $I\ni i$. For the base case $j = n$, we consider two cases. If $i = n$, it is clear that $e_i^{(I)}\ge e_i = e_i^{(i)}$. If $i < n$, since the hypotheses are topologically ordered, $n\notin A_i$ and hence $e_n^{(i)} = 0\le e_n^{(I)}$.
    Assuming that $e_k^{(I)}\geq e_k^{(i)}$ for all $k>j$, and we wish to show that $e_j^{(I)}\geq e_j^{(i)}$. If $j\notin A_i$ we have it automatically as $e_j^{(i)}=0$, and otherwise, 

    \begin{align}\label{e_j^i-comp}
        e_j^{(i)}=\min\left(e_j,\sum_{(j,k)\in E}q_{jk}e_k^{(i)}\right)\leq e_j^{(I)},
    \end{align}
    as 
    \begin{align*}
        e_j^{(I)}=\begin{cases}e_j & j\in I \\ \sum_{(j,k)\in E}q_{jk}e_k^{(I)} & j\not\in I\end{cases},
    \end{align*}
    and $e_j^{(i)}\leq e_j$ and $e_j^{(i)}\leq\sum_{(j,k)\in E}q_{jk}e_k^{(i)}\leq\sum_{(j,k)\in E}q_{jk}e_k^{(I)}$ by the inductive assumption, meaning regardless of whether $j\in I$, $e_j^{(i)}\leq e_j^{(I)}$ as intended.

    For the second part, let 
    \[I_{i}^{(i)} = \{i\}, \quad I_{j}^{(i)} = \begin{cases}
   I_{j+1}^{(i)} \cup \{j\} & e_j^{(i)} = e_j, j\in A_i\\  
    I_{j+1}^{(i)} & \text{otherwise}
    \end{cases}.\]
    We then construct $I_i^* = I_1^{(i)}$. For any $j\notin A_i$, there is no path from $j$ to any element in $I$. By Definition \ref{def:e_j^I}, $e_j^{(I_i^*)} = 0$ for any $j > i$. We prove $e_j^{(i)} = e_j^{(I_i^*)}$ for $j\in A_i$ by induction backward from $j = i$ to $j = 1$, for each $i \in [n]$. For the base case $j = i$, $e_i^{(i)} = e_i$, which is $e_i^{(I_i^*)}$ by Definition \ref{def:e_j^I}. Assuming that $e_k^{(i)} = e_k^{(I_i^*)}$ for all $k\in A_i$ and $k > j$, we wish to prove $e_j^{(i)} = e_j^{(I_i^*)}$. If $j \in I_i^*$, $e_j^{(I_i^*)} = e_j$ by Definition \ref{def:e_j^I} and $e_j^{(i)} = e_j$ by construction of $I_i^*$. If $j\notin I_i^*$, 
    \[e_j^{(i)} =\sum_{(j,k)\in E}q_{j,k}e_k^{(i)}.\] 
    By the induction hypothesis, $e_k^{(i)} = e_k^{(I_i^*)}$ for all $k$ such that $(j, k)\in E$. By Lemma \ref{lem:e_j^I-rec},  \[e_j^{(i)} 
 = \sum_{(j,k)\in E}q_{j,k}e_k^{(I_i^*)} = e_j^{(I_i^*)}.\]
 The proof is then completed. 
\end{proof}

\section{Post-hoc FWER control on the running maxima of e-processes}\label{sec:always-post-hoc}

Theorem \ref{th:always-valid} no longer holds for general data-driven $\alpha$. Following the counterexample from the proof of Proposition 3.1 of \cite{tavyrikov2025carefree}, we construct two e-processes $(e_{1t},e_{2t})$ where $e_{10}=e_{20}=1$ and $e_{it}=\prod_{s=1}^te_s^i$, where \begin{align*}
    (e_s^1,e_s^2)=\begin{cases}
        (1/2,1/2) & \text{w.p. }1/3 \\
        (1/2,2) & \text{w.p. }1/3 \\
        (2,1/2) & \text{w.p. }1/3.
    \end{cases}
\end{align*}
This is the same construction as \cite{tavyrikov2025carefree} except the initial values are set to 1 instead of jointly set to $1/(2\alpha)$ with probability $2\alpha$ and 0 otherwise.

When run with the running maxima $\tilde e_{it}=\max_{0\leq s\leq t}e_{is}$, e-Holm, the closure of e-Bonferroni with equal weights, makes a rejection if $\tilde e_{1t}+\tilde e_{2t}\geq2/\alpha$ and either $\tilde e_{1t}\geq1/\alpha$ or $\tilde e_{2t}\geq1/\alpha$. In this way, for any $t>0$, we can set $\hat\alpha_t=2/(\tilde e_{1t}+\tilde e_{2t})$, which is guaranteed to be below $1$ as $\tilde e_{it}\ge e_{i0} = 1$, and we are guaranteed that at time $t$, at least one hypothesis, $i_t=\arg\max_{i\in\{1,2\}}\tilde e_{it}$, will be rejected at level $\hat\alpha_t$. In this case, the family-wise scaled risk is \begin{align*}\mathbb{E}\left[\frac{\mathbf{1}\{\text{reject a hypothesis at level }\hat\alpha_t\}}{\hat\alpha_t}\right]=\mathbb{E}\left[\frac{\tilde e_{1t}+\tilde e_{2t}}{2}\right]=\mathbb{E}[\tilde e_{1t}]\end{align*} by symmetry. It is clear that $\mathbb{E}[\tilde e_{1t}] \geq \mathbb{E}[e_{1t}] = 1$. For example, taking $t=1$, this procedure will have inflated post-hoc risk, since $\tilde e_{1t}\geq 1$, and with probability 1/3, $\tilde e_{11}=2$, so $\mathbb{E}[\tilde e_{11}]\geq\frac{4}{3}>1$.

Nonetheless, following \cite{tavyrikov2025carefree}, we can apply adjusters to running maxima $\tilde{e}_{It} = \max_{s\le t}e_{Is}$ for all intersection hypotheses to convert them to e-processes with respect to the same filtration \citep[e.g.][]{shafer2011test, koolen2014buy, choe2024combining}. The adjusted closed test achieves post-hoc always-valid FWER control. 

\begin{Th}\label{th:post-hoc-always-valid}
Let $A$ be any increasing function satisfying 
\[\int_1^{\infty}\frac{A(x)}{x^2}dx \le 1.\]
We define $\tilde e_i^*$ as the result of applying an e-value-based closed testing procedure to the adjusted maxima of e-processes $\tilde e_i=\max_{t\geq0}e_{it}$. Then applying the adjuster $A$ to $\tilde e_i^*$, yielding $A(\tilde e_i^*)$, controls FWER for any data-driven level $\hat{\alpha}$ measurable in $\bigcup_{t\ge 1}\mathcal{F}_t$, i.e., 
  \begin{align*}  &\BE\left[\frac{\mathbf{1}\{\max_{i\in\mathcal{H}_0} A(\tilde e_i^*)\geq\frac{1}{\hat\alpha}\}}{\hat\alpha}\right]\leq1.
    \end{align*}

\end{Th}

\begin{proof}
By Ville's inequality \eqref{eq:villes}, $1/\sup_t e_{It}$ is a valid p-value for the intersection hypothesis $H_I$. Let $B(x) = A(1/x)$. The condition on $A$ is equivalent to 
\[\int_0^1 B(y)dy \le 1.\]
Thus, $A(\sup_t e_{It}) = B(1/\sup_t e_{It})$ is a valid e-value. Applying the same argument as in Theorem \ref{th:post-hoc-fwer}, the proof is completed.  
\end{proof}

The integral formulation in Theorem \ref{th:post-hoc-always-valid} for a valid adjuster is from \cite{koolen2014buy}. According to \cite{choe2024combining} and \cite{wang2025anytime}, $\{A(e_{It}): I\subset [n]\}$ are global e-processes even if $\{e_{It}: I\subset [n]\}$ are local e-processes with respect to a coarser filtration than $\mathcal{F}_t$. In such cases,  Theorem \ref{th:post-hoc-anytime-valid} continues to hold.

In Theorem \ref{th:post-hoc-always-valid}, there are three operations at play, namely applying the maximum over time, adjusting via closed testing, and calibrating with the adjuster $A$. We apply $A$ last, but it is also valid to apply the adjuster $A$ first then compute the closure, as $A(e_{it})$ is a valid global e-process. 

In practice, most popular adjusters are concave, including $A(e)=\sqrt{e}-1$ and $A(e)=\frac{e-1-\log e}{(\log e)^2}$ \citep{tavyrikov2025carefree}. For the weighted e-Bonferroni local tests we define in Section \ref{sec:weighted-e-bonf}, the Jensen's inequality implies $A(\tilde e_I)=A(\sum_{i\in I}w_i(I)\tilde e_i)\geq\sum_{i\in I}w_i(I)A(\tilde e_i)$ for each subset $I$, so applying the adjuster after closure leads to higher power than the reverse. 

\section{Additional detail on the uniform mixture example in Section \ref{sec:e-holm-vs-p-holm}}\label{sec:e-vs-p-holm-details}

First, we claim that the optimal p-value for testing $H_0:X\sim\textrm{Unif}[0,1]$ against $H_1:X\sim\pi\cdot\textrm{Unif}[0,1]+(1-\pi)\cdot\textrm{Unif}[1-c,1]$ is $p(X)=1-X$. Lemma \ref{lem:neyman-pearson} proves this claim.

\begin{Lemma}\label{lem:neyman-pearson}
In this setting, the p-value
\[
p(X)=1-X
\]
is exact under the null and is optimal for the mixture alternative above. In particular, for every $\alpha\in[0,1]$, the test
\[
\phi_\alpha(X)=I\{1-X\le \alpha\}
=I\{X\ge 1-\alpha\}
\]
maximizes $\BE_1[\phi(X)]$ over all measurable tests $\phi:[0,1]\to[0,1]$ satisfying $\BE_0[\phi(X)]\le \alpha$.
\end{Lemma}

\begin{proof}
Under the null, $X$ is uniform on $[0,1]$, so $1-X$ is also uniform on $[0,1]$. Thus $p(X)=1-X$ is an exact p-value.

It remains to prove optimality. Let $f_0$ and $f_1$ be the null and alternative densities with respect to Lebesgue measure on $[0,1]$. Then $f_0(x)=1$ and
\[
f_1(x)
=
\pi+(1-\pi)c^{-1}I\{x\in[1-c,1]\}.
\]
Set
\[
H=[1-c,1],
\qquad
 a=\pi+\frac{1-\pi}{c},
\qquad
 b=\pi .
\]
Then $f_1=a$ on $H$, $f_1=b$ on $[0,1]\setminus H$, and $a\ge b$.

Consider any measurable test $\phi:[0,1]\to[0,1]$ with $\BE_0[\phi(X)]\le \alpha$. Since $f_0=1$,
\[
\int_0^1 \phi(x)\,dx \le \alpha.
\]
The power of $\phi$ under the alternative is
\[
\BE_1[\phi(X)]
= b\int_0^1 \phi(x)\,dx
  +(a-b)\int_H \phi(x)\,dx .
\]
Because $0\le \phi\le 1$ and $H$ has length $c$,
\[
\int_H \phi(x)\,dx
\le
\min\{\alpha,c\}.
\]
Therefore every level-$\alpha$ test satisfies
\[
\BE_1[\phi(X)]
\le
b\alpha+(a-b)\min\{\alpha,c\}.
\]

Now consider $\phi_\alpha(x)=I\{x\ge 1-\alpha\}$. It has null rejection probability exactly $\alpha$. If $\alpha\le c$, then its rejection region lies inside $H$, and therefore
\[
\int_H \phi_\alpha(x)\,dx=\alpha .
\]
If $\alpha>c$, then its rejection region contains all of $H$, and therefore
\[
\int_H \phi_\alpha(x)\,dx=c .
\]
In both cases,
\[
\int_H \phi_\alpha(x)\,dx=\min\{\alpha,c\}.
\]
Thus $\phi_\alpha$ attains the upper bound above. Hence it maximizes power among all tests of level at most $\alpha$ for every $\alpha\in[0,1]$. This proves the optimality of $p(X)=1-X$.
\end{proof}

\section{More detailed description of the simulations in Section \ref{sec:new-yorker}}\label{sec:simulation-appendix}

First we describe the formulation for the bandit sampling algorithm in the New Yorker data semi-synthetic simulation. For both the Holm and DAG simulations, we take $n$ captions and run a lilUCB sampling algorithm \citep{jamieson2014lil}. We first draw one observation for each caption to get an initial average estimate $\hat\mu_i^{(0)}$ for each caption $i$, then compute an upper confidence bound $\textrm{UCB}_i^{(0)}$ for each caption. With hyperparameter $\delta$ and $\kappa$, and $n_i^{(t)}$ the number of pulls for caption $i$, the upper confidence bound, using a Bernoulli bound, is 

\begin{align*}
    \textrm{UCB}_i^{(t)}=\sup\left\{q\in[\hat\mu_i^{(t)},1]:d_{\textrm{bern}}(\hat\mu_i^{(t)},q)\leq\frac{\beta_i(t,\delta)}{n_i^{(t)}}\right\}.
\end{align*}

$\beta_i$ is the same function as in \cite{jamieson2014lil}, which is 

\begin{align*}
    \beta_i(t,\delta)=\log(\kappa\log(2n_i^{(t)}/\delta)/\log(2)).
\end{align*}

Then the bandit sampling algorithm pulls a uniformly random arm with probability $\epsilon=0.4$ and a uniform arm from the set of maximal UCBs $\{i:i=\arg\max_{j\in[n]}\textrm{UCB}_j^{(t)}\}$.

For the Holm simulation that is testing the caption means, the e-value at time $s$ for the sampled arm is $\exp(\eta(Y_s-\mu_0)-\psi)$ where $\eta$ is a tuning parameter we set to $\eta=0.25$, $\mu_0=0.25$ is the null against which we are testing, and $\psi$ is a normalization parameter such that the e-value is valid. For this construction we rescale the votes in $\{1,2,3\}$ to be in $\{0,0.5,1\}$ to bound the data in $[0,1]$. In general one can choose $(\eta,\psi)$ predictably to improve power but we are illustrating the multiple testing procedure, not optimizing for power.

The normalization term $\psi$ has the form
\begin{align*}
    \psi=\log((1-\mu)\cdot\exp(-\eta\cdot\mu_0)+\mu\cdot\exp(\eta\cdot(1-\mu_0))).
\end{align*}
As with the UCB formula, this is chosen as a conservative term, which is based on testing a Bernoulli random variable.

For the DAG simulation where we have $n=48$ captions, we use the same bandit sampling algorithm, but instead we test the effects of each treatment through a vector of contrasts. Specifically, each caption is assigned a group in $\{0,1\}^3$ based on their semantic values for each of the three features. Ordering these groups by their binary encoding (000, 001, 010, 011, 100, 101, 110, 111), we can construct the matrix of contrasts that corresponds to the main and interaction effects $\beta$. We ignore the three-way interaction effect in our testing.

To obtain the vector of $\beta$'s $(\beta_1,\beta_2,\beta_3,\beta_{12},\beta_{13},\beta_{23})$, we multiply the vector $\mu$ of true group-level means by a matrix of contrasts $C$, so $\beta=C\mu$ where
\begin{align*}
    C=\begin{pmatrix}
-1 & 0 & 0 & 0 & 1 & 0 & 0 & 0 \\
-1 & 0 & 1 & 0 & 0 & 0 & 0 & 0 \\
-1 & 1 & 0 & 0 & 0 & 0 & 0 & 0 \\
1 & 0 & -1 & 0 & -1 & 0 & 1 & 0 \\
1 & -1 & 0 & 0 & -1 & 1 & 0 & 0 \\
1 & -1 & -1 & 1 & 0 & 0 & 0 & 0 
\end{pmatrix}.
\end{align*}

We track the estimate for $\hat\beta$ for each effect, and we track the group mean $\hat\mu$, weighted such that each caption is treated equally. For the balanced design where each group has the same number of captions, this contrast construction corresponds exactly to the linear regression population coefficients. To test the coefficients against a null of 0, we employ a hedged process betting against each side of 0 to get a two-sided test. For betting scale $\lambda=0.05$ we set our bet for parameter $j$ at time $t$ as
\begin{align*}    \psi^{(t)}_j=\hat\beta_j^{(t-1)}+C_{ja}(y-\hat\mu_a^{(t-1)})/(6\pi),
\end{align*}
where $i$ is the caption that was pulled and $a=a(i)$ is the caption's group. In essence $a$ is the arm that was pulled, with the captions being elements of each arm. $\pi$ is the probability of pulling caption $i$ based on the bandit sampling algorithm. The division by 6 reflects the number of captions per arm which is necessary to scale the estimate correctly.

This bet is valid through an AIPW construction where the bet has conditional mean zero under the null, but which results in growth for an alternative even when that arm is not pulled.

The e-processes for effect $j$ are formed by 
\begin{align*}    e_{jt}^{(+)}=\prod_{s=1}^t(1+\lambda\psi_j^{(s)}), \quad e_{jt}^{(-)}=\prod_{s=1}^t(1-\lambda\psi_j^{(s)}).
\end{align*}
Then the overall e-process is a fixed mixture of these e-processes
\begin{align*}
    e_{jt}=\frac{1}{2}\left(e_{jt}^{(+)}+e_{jt}^{(-)}\right)
\end{align*}
which is valid and covers both sides of the null. We choose a fixed $\lambda=0.05$ such that the betting factors are always nonnegative. This is easy to do for a predictable $\lambda_t$, as we have an exploration floor $\epsilon$ and bounded outcomes. This type of hedged capital process is in the style of \cite{waudby2024estimating}.

\section{Extensions to select non-DAGs}\label{app:idx-loc}

In this appendix we extend our result from Section \ref{sec:dag} to a larger class of graphs which we call Index-Local DAGs. The key motivating observation is that our argument in Section \ref{sec:dag} relies on the fact that for every index $i$, the graph over which we have to optimize is a DAG, with $i$ as its only leaf. Thus, we define:

\begin{Def}\label{def:ildag} A graph $\mathcal{G}=(\mathcal{V},\mathcal{E})$ is an \textit{Index-Local DAG (ILDAG)} if for every $i$, the graph $\mathcal{G}^{(i)}$ formed by removing $j\in V$ such that there is no path from $j$ to $i$, as well as removing any $(i,k)\in \mathcal{E}$, is a DAG.

\end{Def}

The motivation for this is that when computing $e_i^*$, we have shown we only need to consider subsets of the ancestor graph $I\subset A_i$, and moreover only subsets containing $i$. Since $i\in I$, we know the $\alpha$-budget at $i$ will never be redistributed, i.e. $e_i^{(i)}=e_i$ in the notation of Section \ref{sec:dag}, so we can remove the edges from $i$ and have the same minimization problem. At this point, if the remaining graph is a DAG then we have by the result of Section \ref{sec:dag} that the DAG algorithm, for each $i$, will produce correct adjusted e-values $\{e_i^*\}_{i=1}^n$. This can be done by performing the DAG algorithm's backwards search for each $i$ on its ancestor graph $A_i$, and ignoring any reverse edges $i\rightarrow j$.

The result is somewhat limited, but there are at least two interesting cases. The first is the cyclical Fallback graph, which is identical but for an additional edge $n\rightarrow1$, ensuring that we are always working with the full $\alpha$-budget. This graph clearly satisfies the requirement to be an ILDAG, since removing any one edge $(i\rightarrow i+1)$ breaks the cycle and forms a Fallback chain starting at $i+1$ and ending at $i$.

The other example is the gatekeeper procedure \citep[e.g.][]{bretz2009graphical} for two endpoints with a cycle between them, a visualization of which is in Figure \ref{fig:gatekeeper}.

\begin{figure}
    \centering
    \includegraphics[scale = 0.2]{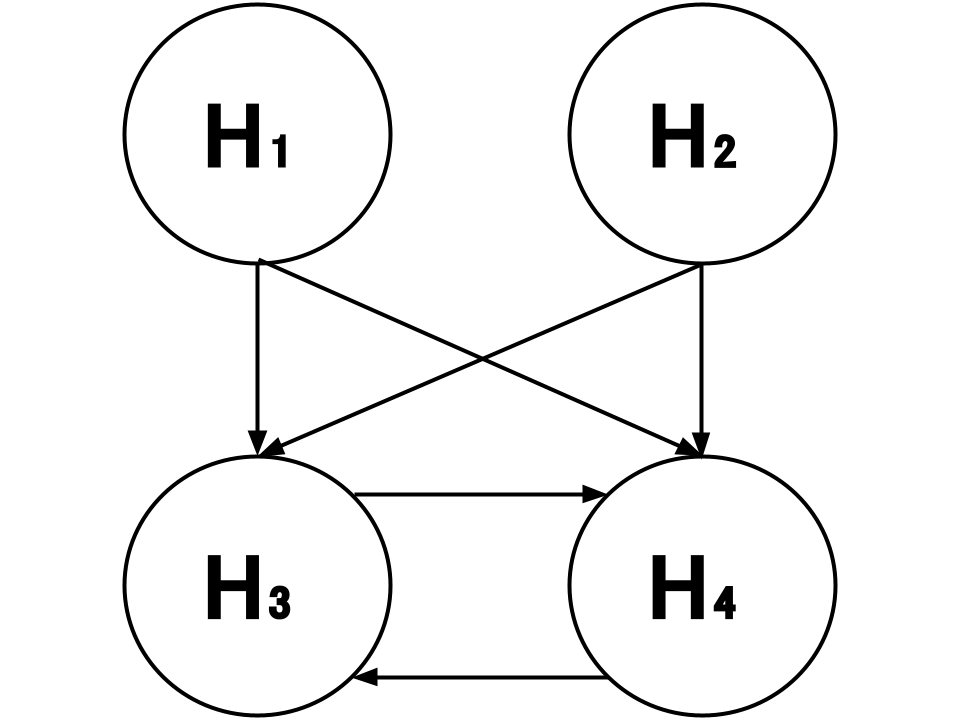}
    \caption{A gatekeeper graph with two nodes in each row. It is an example of an Index-Local DAG satisfying Definition \ref{def:ildag} since at least one of the two cycle-producing edges $H_3,H_4$ is removed when turning any given node into a leaf.}
    \label{fig:gatekeeper}
\end{figure}

Because we rerun Algorithm \ref{alg:dag} for each $i\in[n]$, with potentially all nodes included in $A_i$, a naïve upper bound on the worst-case runtime is $O(n^2|\mathcal{E}|)$.

\section{Expected runtime of e-Fallback algorithm \ref{alg:rev-search}}\label{app:exp-run-app}

\begin{Th}\label{th:exp-run} In the context of Algorithm \ref{alg:rev-search}:

\begin{enumerate}
    \item If the e-values $\{e_i\}_{i=1}^n$ are exchangeable, then the runtime is $O_P(n)$. 

    \item If there are $k$ non-null e-values, and the $n-k$ null e-values are exchangeable, then the runtime is $O_P(kn)$.
\end{enumerate}
    
\end{Th}

\begin{proof}
First we show part (1), by denoting $t_i$ as the number of comparisons to $e_i$ queried, i.e. $t_i=i-j(i)$ in the notation of the previous proof. Then we see that the runtime is $O(\sum_{i=1}^nt_i)$, where in expectation we compute 

\begin{align}
\BE_Pt_i=\sum_{k=1}^\infty\BP(t_i\geq k)=\sum_{k=1}^{i-1}\BP(t_i\geq k),
\end{align}
where $\BP(t_i\geq k)=\BP(\min(\{e_{i-k+1},...,e_i\}=e_i)\leq\frac{1}{k!}$ with equality if the distribution of $e_i$ is continuous, a result following from order statistics of exchangeable collections of random variables. By the Darth Vader Rule, $\BE_P[t_i]\leq\sum_{k=1}^{i-1}\frac{1}{k!}\leq e-1$, so we conclude that our runtime is $O_P((e-1)n)=O_P(n)$.

For part (2), we use a similar framework, but lower bound $t_i\leq n$ for $e_i$ a non-null, and $t_i\leq k+t_i'$ where $t_i'$ is the number of nulls preceding $i$ until we reach a null e-value with $e_j\leq e_i$. We enumerate the nulls by indices $\{i_j\}_{j=1}^{n-k}$, and define for each $i=i_\ell$ $j_0:=\max\{j<\ell|e_{i_j}\leq e_i\}$, the corresponding index for just the nulls. Then defining $t_i'=\ell-j_0$ we have $\BE_P[t_i']\leq e-1$ again, and our upper bound $t_i\leq k+e-1$. Putting these together for $k$ non-nulls and $n-k$ nulls, we have runtime 
\begin{align*}
    O_P(k(n)+(n-k)(k+e-1))&=O_P(nk+nk+(e-1)n-k^2-(e-1)k)\\&=O_P(nk+n),
\end{align*} so $O_P(nk)$ for $k>0$.

\end{proof}

This result is nice but uses exchangeability assumptions that may be unrealistic. However, it gives a sense of why we expect that this algorithm will provide a significant runtime improvement, which is achieved explicitly with the modified algorithm in Section \ref{sec:stack}.

\section{E-Holm adjusted e-values algorithm}\label{app:e-Holm} While the procedure in Theorem \ref{th:eholm_rejection} for the rejection set has an $O(n)$ complexity, it does not compute adjusted e-values. In this appendix, we re-derive and restate the e-Holm algorithm given by \cite{vovk2021values, vovk2023confidence}. By \eqref{eq:ebonf}, 
\[e_i^* = \min_{I\ni i}\frac{1}{|I|}\sum_{j\in I}e_j = \min_{I\ni i}\frac{1}{|I|}\left\{e_i + \sum_{j\in I\setminus\{i\}}e_j\right\}.\]
Clearly, the minimum is achieved by including in $I$ the $|I| - 1$ smallest e-values other than $e_i$. To pin down the exact expression, we sort our e-values $e_{(1)},...,e_{(n)}$ in decreasing order (to match the order of the associated 1/e p-values), and call $H_{(i)}$ the hypothesis associated with the e-value $e_{(i)}$ and adjusted e-value $e_{(i)}^*$. Then 

\begin{align*}
\min_{I\ni i,|I|=k+1}\frac{1}{|I|}\sum_{j\in I}e_{(j)}=\begin{cases}\frac{e_{(i)}+\sum_{j=n-k+1}^ne_{(j)}}{k+1} & k\leq n-i \\ \frac{\sum_{j=n-k}^ne_{(j)}}{k+1} & k>n-i
\end{cases}.
\end{align*}
To simplify the notation, let 
\[E_k = \sum_{j=n-k+1}^{n}e_{(j)}.\]
Then we can express the adjusted e-value for $H_{(i)}$ as 
\begin{equation}\label{eq:eholm_adjusted_e}
e_{(i)}^{*} = \min \left\{\min_{k\le n-i}\frac{e_{(i)} + E_k}{k+1}, \min_{k> n-i}\frac{E_{k+1}}{k+1}\right\} = \min_{k\le n-i}\frac{e_{(i)} + E_k}{k+1},
\end{equation}
where the second equality follows because $E_k / k$ is non-decreasing and thus, for any $k > n-i$,
\begin{equation*}
\min_{k > n-i}\frac{E_{k+1}}{k+1} \ge \frac{E_{n-i+1}}{n-i+1} = \frac{e_{(i)} + E_{n-i}}{n-i+1} \ge \min_{k\le n-i}\frac{e_{(i)} + E_k}{k+1}.
\end{equation*}
The terms $E_1, \ldots, E_n$ can be jointly computed in $O(n\log n)$ time, including sorting and cumulative sums. Calculating all adjusted e-values separately would result in $O(n^2)$ time in total because $e_{(i)}^{*}$ is the minimum of $n-i$ terms. We can reduce the computation cost to $O(n)$ by leveraging the following observation.
\begin{Lemma}\label{lem:ki}
Let $k_i = \max\{k: e_{(i)}^{*} = (e_{(i)} + E_{k})/(k + 1)\}$. Then $k_i$ is non-increasing in $i$.
\end{Lemma}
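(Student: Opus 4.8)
The plan is to argue by contradiction, using the defining property of $k_i$ as the \emph{largest} minimizer together with the fact that the sorted e-values satisfy $e_{(1)} \ge \cdots \ge e_{(n)}$. Write $f_i(k) = (e_{(i)} + E_k)/(k+1)$ for the objective in \eqref{eq:eholm_adjusted_e}, whose feasible range is $0 \le k \le n-i$. Suppose monotonicity fails, so there are indices $i < i'$ with $k' \triangleq k_{i'} > k_i \triangleq k$. I would first handle the feasibility bookkeeping: since $k' = k_{i'} \le n-i'$ and $i < i'$, we get $k' \le n-i' < n-i$, so $k'$ is feasible for index $i$; similarly $k < k' \le n-i'$ shows $k$ is feasible for index $i'$. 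Thus both split points are admissible for both indices, which is exactly what lets me compare the two objectives at the same pair of arguments despite the domain $\{0,\dots,n-i\}$ shrinking as $i$ grows.

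Next I would extract the two inequalities forced by optimality, watching the tie-breaking convention closely. Because $k = k_i$ is the \emph{largest} minimizer of $f_i$ and $k' > k$ is feasible for $i$, the value $f_i(k')$ cannot attain the minimum, giving the \emph{strict} inequality $f_i(k') > f_i(k)$. Because $k' = k_{i'}$ is a minimizer of $f_{i'}$ and $k < k'$ is feasible for $i'$, we get the weak inequality $f_{i'}(k) \ge f_{i'}(k')$. Clearing the (positive) denominators in each and collecting the $e_{(i)}$ and $e_{(i')}$ terms, both inequalities can be rearranged so that the \emph{same} quantity $(k+1)E_{k'} - (k'+1)E_k$ sits on one side: the first yields $(k+1)E_{k'} - (k'+1)E_k > (k'-k)\,e_{(i)}$, and the second yields $(k+1)E_{k'} - (k'+1)E_k \le (k'-k)\,e_{(i')}$.

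Chaining these bounds eliminates the common $E$-term and leaves $(k'-k)\,e_{(i)} < (k'-k)\,e_{(i')}$. Since $k' > k$, dividing by $k'-k > 0$ gives $e_{(i)} < e_{(i')}$, contradicting $e_{(i)} \ge e_{(i')}$ for $i < i'$; this contradiction proves $k_i$ is non-increasing. The bulk of the argument is routine algebra, and the shared term $(k+1)E_{k'} - (k'+1)E_k$ cancels cleanly once isolated. The only two points needing genuine care, and where I expect the sole friction, are (i) pinning down the strict-versus-weak inequalities, which hinges entirely on $k_i$ being the \emph{largest} minimizer, and (ii) the feasibility check guaranteeing $k$ and $k'$ lie in both domains. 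Everything else follows mechanically.
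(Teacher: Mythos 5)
Your proposal is correct and is essentially the paper's argument: both rest on the strict inequality $f_i(k') > f_i(k_i)$ for $k' > k_i$ (from $k_i$ being the \emph{largest} minimizer) combined with $e_{(i)} \ge e_{(i')}$ to transfer that strict inequality to the later index, ruling out any minimizer beyond $k_i$. The only differences are cosmetic — you argue by contradiction for general $i < i'$ and isolate the common term $(k+1)E_{k'} - (k'+1)E_k$, while the paper argues directly for consecutive indices; you also make the feasibility bookkeeping explicit, which the paper leaves implicit.
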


\begin{proof}
By definition of $k_i$, for any $k > k_i$, 
\[\frac{e_{(i)} + E_k}{k+1} > \frac{e_{(i)} + E_{k_i}}{k_i+1}\Longrightarrow \left( \frac{1}{k_i+1} - \frac{1}{k+1}\right) e_{(i)} <\frac{E_k}{k+1} - \frac{E_{k_i}}{k_i + 1}.\]
Since $e_{(i)}\ge e_{(i+1)}$, 
\[\left( \frac{1}{k_i+1} - \frac{1}{k+1}\right) e_{(i+1)}< \frac{E_k}{k+1} - \frac{E_{k_i}}{k_i + 1}\Longrightarrow \frac{e_{(i+1)} + E_k}{k+1} >  \frac{e_{(i+1)} + E_{k_i}}{k_i+1}.\]
This implies the minimum of $(e_{(i+1)} + E_{k})/(k + 1)$  cannot be attained for some $k > k_i$. As a result, $k_{i+1}\le k_i$.
\end{proof}

Lemma \ref{lem:ki} suggests $k_i$ and the adjusted e-values can be computed sequentially in a decreasing order from $i = n$ to $1$. The details are presented in Algorithm \ref{alg:ebonf}.

This algorithm correctly computes our minima as earlier argued, and has runtime $O(n\log n)$, from the sorting. The computation only changes $k$'s value $n$ times, and changes $e^*$ a maximum of $2n$ times.

\begin{algorithm}
    \textit{Input:} Vectors $\{e_i\}_{i=1}^n$;

    \textit{Sort:} $\{e_i\}_{i=1}^n$ from highest to lowest as $\{e_{(i)}\}_{i=1}^n$;

    \textit{Compute:} $E_1=e_{(n)}$, $E_k=E_{k-1}+e_{(n-k+1)}$ for $k=2$ to $k=n$;
    
    \textit{Initialize:} $e_{(n)}^*=e_{(n)}$, $e^*=e_{(n)},k=1$;
    
    \textit{For:} $i=n-1$ to 1,
    
    \textit{Do:} $e^*=e^*+\frac{e_{(i)}-e_{(i+1)}}{1+k}$;

    \indent While ($e^*>e_{(n-k)}$): $e^*=\frac{1+k}{2+k}e^*+\frac{1}{2+k}e_{(n-k)}$; $k=k+1$.

    \indent $e_{(i)}^*=e^*$.
    
    \textit{Output:} $\{e_{(i)}^*\}_{j=i}^n$.

    \caption{E-Holm adjusted e-values}\label{alg:ebonf}

\end{algorithm}

\end{appendices}

\end{document}